\documentclass[12pt]{article}

\usepackage{times}
\usepackage{fullpage}
\usepackage{amsfonts,amssymb,amsmath,amsthm}
\usepackage{latexsym} 
\usepackage{gauss}
\usepackage{tikz}
\usepackage{url}
\usepackage{thmtools}
\usepackage{thm-restate}
\usepackage{hyperref}
\usepackage[capitalise, nameinlink]{cleveref}
\usepackage{algorithm}
\usepackage{algorithmicx}
\usepackage{algpseudocode}
\usepackage[shortlabels]{enumitem}
%\usepackage[pdftex]{color}
%\usepackage[breaklinks]{hyperref}

%%% Basic notation

\newcommand{\R}{\mathbb{R}}

\newcommand{\Tr}{\mathrm{Tr}}

\newcommand{\ADV}{\mathrm{ADV}}

\def\01{\{0,1\}}

% Macros from negative weights

\newcommand{\ket}[1]{|#1\rangle}
\newcommand{\braket}[2]{\langle#1 | #2\rangle}
\newcommand{\ketbra}[2]{| #1 \rangle \langle #2 |}

\newtheorem{theorem}{Theorem}

\newtheorem{lemma}[theorem]{Lemma}
\newtheorem{corollary}[theorem]{Corollary}

\newtheorem{remark}[theorem]{Remark}

\newtheorem{claim}[theorem]{Claim}
\newtheorem{fact}[theorem]{Fact}

\theoremstyle{definition}

\newtheorem{definition}[theorem]{Definition}

\begin{document}
\title{The quantum query complexity of composition with a relation}
\author{Aleksandrs Belovs \thanks{Faculty of Computing, University of Latvia. Email: stiboh@gmail.com} \and 
Troy Lee \thanks{Centre for Quantum Software and Information, University of Technology Sydney.  Email: troyjlee@gmail.com}}
\date{}
\maketitle

\begin{abstract}
The negative weight adversary method, $\ADV^\pm(g)$, is known to characterize the bounded-error quantum query 
complexity of any Boolean function $g$, and also obeys a perfect composition theorem $\ADV^\pm(f \circ g^n) = \ADV^\pm(f) \ADV^\pm(g)$. 
Belovs \cite{Bel15} gave a modified version of the negative weight adversary method, $\ADV_{rel}^\pm(f)$, that characterizes the 
bounded-error quantum query complexity of a relation $f \subseteq \{0,1\}^n \times [K]$, provided the relation is efficiently verifiable.  
A relation is efficiently verifiable if $\ADV^\pm(f_a) = o(\ADV_{rel}^\pm(f))$ for every $a \in [K]$, where $f_a$ is the Boolean function defined as 
$f_a(x) = 1$ if and only if $(x,a) \in f$.  In this note we show a perfect composition theorem for the composition of a relation $f$ with a Boolean function $g$
\[
\ADV_{rel}^\pm(f \circ g^n)  = \ADV_{rel}^\pm(f) \ADV^\pm(g) \enspace .
\]
For an efficiently verifiable relation $f$ this means $Q(f \circ g^n) = \Theta( \ADV_{rel}^\pm(f) \ADV^\pm(g) )$.
\end{abstract}

\section{Introduction}
Quantum query complexity has been a very successful model for studying quantum algorithms.  
The most famous quantum algorithms, like Grover's search algorithm \cite{Gro96} and the period finding routine of 
Shor's algorithm \cite{Sho97}, can be formulated in this model, and the model has also been fruitful 
for developing new algorithmic techniques like quantum walks \cite{Amb07, Sze04, MSS07} and learning graphs 
\cite{Bel12}.  

One of the greatest successes of quantum query complexity is the adversary method.  The (unweighted) adversary 
method began as a lower bound technique developed by Ambainis \cite{Amb02} to show a lower bound on the 
quantum query complexity of the two-level AND-OR tree, among other problems.  It was later generalized to a weighted version in various 
forms by several authors \cite{Amb06, Zha05, BSS03, LM08}.  \v{S}palek and Szegedy \cite{SS06} showed that all of these 
generalizations were in fact equivalent.

A strictly stronger bound called the negative weights adversary method, $\ADV^\pm(f)$, was developed by H{\o}yer et al. \cite{HLS07}.
In a wonderful turn of events, a series of works by Reichardt et al. \cite{RS12, Rei14, Rei11} showed that the bounded-error quantum query 
complexity, $Q(f)$, of a Boolean function $f$ satisfies $Q(f) = O(\ADV^\pm(f))$, and therefore $\ADV^\pm(f)$ characterizes bounded-error quantum query complexity up 
to a constant multiplicative factor.  

This characterization has several interesting consequences.  For one, it means that upper bounds on quantum 
query complexity on a function $f$ can now be shown by upper bounding $\ADV^\pm(f)$, which can be expressed 
as a relatively simple semidefinite program.  
This approach has led to improved algorithms for many problems of interest \cite{Bel12b,Bel15b,LMS17}, especially via the development of the 
learning graphs model of Belovs \cite{Bel12}.

Another consequence is that quantum query complexity inherits the nice properties of the adversary bound.  
One of these nice properties is that the adversary bound behaves perfectly with respect to \emph{function composition}.
For Boolean functions $f:\{0,1\}^n \rightarrow \{0,1\}$ and $g: \{0,1\}^m \rightarrow \{0,1\}$ define the 
composition $h = f \circ g^n$ to be the function $h: \{0,1\}^{nm} \rightarrow \{0,1\}$ where for an input 
$x=(x_1, \ldots, x_n) \in \{0,1\}^{nm}$, with each $x_i \in \{0,1\}^m$, we have $h(x) = f(g(x_1), \ldots, g(x_n))$.
\cite{HLS07} showed that $\ADV^\pm(h) \ge \ADV^\pm(f) \ADV^\pm(g)$, and \cite{Rei14} showed a matching 
upper bound.  

\begin{theorem}[\cite{HLS07, Rei14}]
\label{thm:basic}
Let $f:\{0,1\}^n \rightarrow \{0,1\}$ and $g: \{0,1\}^m \rightarrow \{0,1\}$.  
\[
\ADV^\pm(f \circ g^n) = \ADV^\pm(f) \ADV^\pm(g) \enspace .
\]
\end{theorem}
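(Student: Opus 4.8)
The statement is the perfect composition theorem for the general (negative‑weight) adversary bound, and the plan is to prove the two inequalities separately, following \cite{HLS07} for $\ADV^{\pm}(f\circ g^{n})\ge\ADV^{\pm}(f)\,\ADV^{\pm}(g)$ and \cite{Rei14} for the reverse inequality. Write $h=f\circ g^{n}$, $A=\ADV^{\pm}(f)$, $B=\ADV^{\pm}(g)$, and recall the semidefinite description $\ADV^{\pm}(g)=\max_{\Gamma}\norm{\Gamma}\big/\max_{j\in[m]}\norm{\Gamma\circ\Delta_{j}}$, the maximum over real symmetric matrices $\Gamma$ indexed by $\{0,1\}^{m}$ with $\Gamma[z,z']=0$ whenever $g(z)=g(z')$, where $\Delta_{j}[z,z']=1$ if $z_{j}\ne z'_{j}$ and $0$ otherwise; the same description applies to $f$ (indexed by $\{0,1\}^{n}$) and to $h$ (indexed by $\{0,1\}^{nm}$, coordinates labelled by pairs $(i,j)\in[n]\times[m]$).

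\textbf{Lower bound.} Fix optimal adversary matrices, scaled so that $\norm{\Gamma_{f}\circ\Delta_{i}}\le 1$ and $\norm{\Gamma_{g}\circ\Delta_{j}}\le 1$ with $\norm{\Gamma_{f}}=A$, $\norm{\Gamma_{g}}=B$; replacing a matrix by its negative if needed we may take $A,B$ to be top eigenvalues, with unit principal eigenvectors $\phi,\psi$. The plan is to assemble an adversary matrix $\Gamma_{h}$ for $h$ of the product shape
\[
\Gamma_{h}[x,y]\;=\;\Gamma_{f}\!\left[g^{n}(x),g^{n}(y)\right]\cdot\prod_{i=1}^{n}L[x_{i},y_{i}],
\]
where on block $i$ the factor $L[x_{i},y_{i}]$ is a constant multiple of $\Gamma_{g}[x_{i},y_{i}]$ if $g(x_{i})\ne g(y_{i})$ and a rank‑one padding term proportional to $\psi[x_{i}]\psi[y_{i}]$ if $g(x_{i})=g(y_{i})$, with the two proportionality constants chosen so that the norm estimates below are tight, exactly as in \cite{HLS07}. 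Then check: (i) $\Gamma_{h}$ is a legal adversary matrix for $h$ --- immediate, since the factor $\Gamma_{f}[g^{n}(x),g^{n}(y)]$ already vanishes whenever $h(x)=h(y)$; (ii) $\norm{\Gamma_{h}}\ge AB$, by testing on the vector $x\mapsto\phi[g^{n}(x)]\prod_{i}\psi[x_{i}]$, using the elementary fact that $\Gamma_{g}$ has no nonzero entries within a colour class, so $\psi$ is automatically balanced ($\sum_{g(z)=0}\psi[z]^{2}=\sum_{g(z)=1}\psi[z]^{2}=\tfrac{1}{2}$) and the test vector behaves like the tensor product $\phi\otimes\psi^{\otimes n}$; (iii) $\norm{\Gamma_{h}\circ\Delta_{(i,j)}}\le 1$ for every block coordinate $(i,j)$. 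Dividing gives $\ADV^{\pm}(h)\ge AB$.

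\textbf{Upper bound.} For the reverse inequality the cleanest route is through span programs. By Reichardt's theorem \cite{Rei14} the optimal span‑program complexity of a Boolean function equals its general adversary bound, so fix span programs $P_{f},P_{g}$ computing $f,g$ with complexities exactly $A,B$. Form $P_{h}$ by substituting, for each of the $n$ input variables queried by $P_{f}$, an independent copy of $P_{g}$ acting on the corresponding block of $m$ bits; since a span program accepts exactly when its target vector lies in the span of the available input vectors, $P_{h}$ computes $h$, and the positive and negative witness sizes multiply under substitution, so $P_{h}$ has complexity $AB$. As span‑program complexity upper bounds the adversary bound, $\ADV^{\pm}(h)\le AB$, and with the lower bound this proves \Cref{thm:basic}. (One can instead compose optimal \emph{dual} adversary solutions directly via tensor products; the extra ingredient there is a family of auxiliary vectors extracted from the dual solution for $g$ that neutralise the blocks on which $g$ is constant on the pair --- essentially the same gadget in a different guise.)

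\textbf{Where the work is.} Item (i) and the span‑program substitution are routine. The real content of the lower bound is (iii): $\Gamma_{h}\circ\Delta_{(i,j)}$ retains only pairs differing in the $j$-th bit of block $i$, and these split into pairs with $g(x_{i})=g(y_{i})$ --- handled through the padding term together with $\norm{\Gamma_{f}\circ\Delta_{i}}\le 1$ --- and pairs with $g(x_{i})\ne g(y_{i})$ --- handled through $\norm{\Gamma_{g}\circ\Delta_{j}}\le 1$. Making this rigorous means conjugating $\Gamma_{h}\circ\Delta_{(i,j)}$ by a suitable positive diagonal matrix so that it appears as a composition of operators each of norm at most $1$; arranging the single set of normalisation constants so that this estimate and the lower bound in (ii) are simultaneously tight is the main obstacle, and is the technical core of \cite{HLS07}. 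On the upper‑bound side the nontrivial ingredient is Reichardt's span‑program characterisation, which we use as a black box.
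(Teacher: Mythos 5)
The paper does not actually prove \cref{thm:basic}: it is cited from \cite{HLS07,Rei14} and then used, together with the machinery developed around \cref{def:comp} and \cref{lem:main}, to prove the relational generalizations (\cref{thm:lower,thm:upper}). So the right comparison is between your sketch and the cited proofs as instantiated by the paper's tools, and here there is a genuine mismatch in the lower-bound construction.

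The construction used in \cite{HLS07}, and encapsulated here as the matrix composition of $\Gamma_f$ with copies of $Z$ (\cref{def:comp}, using $\hat{\Gamma}_g = \|\Gamma_g\|I + \Gamma_g$ from \cref{eqn:hatA}), puts a \emph{scaled identity} $\|\Gamma_g\|I$ on the same-colour blocks, not the rank-one matrix $\psi\psi^T$ you describe. This is not a cosmetic difference; it is exactly what makes your step (iii) work. With $\hat\Gamma_g$, the $(x,y)$ entry of $\Gamma_h\circ D_{(p,q)}$ is automatically zero whenever $g(x_p)=g(y_p)$, because the same-colour block of $\hat\Gamma_g$ is diagonal and $x_p(q)\ne y_p(q)$ forces $x_p\ne y_p$. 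Consequently, the only surviving entries have $\tilde x_p\ne\tilde y_p$ and the whole matrix factors as a composition of $\Gamma_f\circ D_p$ with $\Gamma_g\circ D_q$ and $N-1$ copies of $\Gamma_g$ (\cref{eq:last_bit,clm:right_form}), to which \cref{lem:main} applies. With your rank-one padding, $\Gamma_h\circ D_{(p,q)}$ retains nonzero entries at pairs with $g(x_p)=g(y_p)$ but $x_p\ne y_p$; on those pairs $\tilde x_p=\tilde y_p$, so $\Gamma_f\circ D_p$ is \emph{zero} there, and your stated plan to control them ``through the padding term together with $\|\Gamma_f\circ\Delta_i\|\le 1$'' has nothing to bite on. Nor does the spectral argument of \cref{lem:main} transfer, since its proof relies on $\hat A$ sharing the eigenbasis of $\bigl[\begin{smallmatrix}0&A\\A^T&0\end{smallmatrix}\bigr]$, which the rank-one variant does not. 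It seems you have conflated the rank-one \emph{test vector} $v(\tilde x)\prod_i\sqrt 2\, z^{(i),\tilde x_i}_1(x_i)$ used to certify the lower bound on $\|C\|$ with the \emph{padding matrix} inside the adversary construction.

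Two smaller remarks. First, your test vector $x\mapsto\phi[\tilde x]\prod_i\psi[x_i]$ needs the $\sqrt 2$ factors per block to be unit norm (as in the paper's proof of \cref{lem:main}), since $\|z^{(i),b}_1\|^2=1/2$. Second, your upper-bound route through span programs is valid and is indeed what \cite{Rei14} enables; the paper instead composes dual adversary solutions directly (the technique used in \cref{thm:upper}, specializing from relations to functions, and going back to \cite{LMRSS11} via \cref{thm:adv_dual}). You already note this alternative parenthetically, and either is fine. So the high-level plan is sound, but the lower bound as sketched does not close: replace the rank-one padding by $\|\Gamma_g\|I$ (i.e.\ use $\hat\Gamma_g$) and then \cref{lem:main} and the $\phi_a$-free version of the argument in \cref{thm:lower} give the theorem.
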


This perfect composition theorem was later extended by \cite{LMRSS11} to allow $g$ 
to be a partial Boolean function and allow the range of $f$ to be non-Boolean.  Kimmel \cite{Kim13} 
showed a perfect composition theorem where both $f$ and $g$ were allowed to be partial Boolean functions.  
Thus here the domain of $h = f \circ g^n$ is only those $x=(x_1, \ldots, x_n)$ where $(g(x_1), \ldots, g(x_n))$ is in the domain of $f$.

Extensions of the negative weight adversary method have been given for the more general query complexity problem of 
\emph{state generation} \cite{Shi02, AMRR11,LMRSS11}.  In this problem, on input $x$ the algorithm begins in the state $\ket{0}\ket{0}$ as usual 
but now the goal is to transform this state into the state $\ket{\sigma_x} \ket{0}$ for some target vector $\ket{\sigma_x}$ by making queries to $x$.  Lee 
et al. \cite{LMRSS11} have shown that an extension of the negative weight adversary method called the filtered $\gamma_2$ norm gives a \emph{semi-tight} characterization of 
the quantum query complexity of the state conversion problem.  It is semi-tight because a slightly larger error is needed in the upper bound 
than in the lower bound.  

Belovs \cite{Bel15} used this characterization to give a modified adversary bound, $\ADV_{rel}^\pm(f)$, that characterizes the quantum query complexity 
of a relation $f \subseteq \{0,1\}^n \times [K]$, provided the relation is \emph{efficiently verifiable}.  Intuitively, a relation is efficiently verifiable if given $a$ the 
complexity of checking if $(x,a)$ is in the relation is low-order compared to $\ADV_{rel}^\pm(f)$; the formal definition 
is given in \cref{def:verifiable}.  For an efficiently verifiable relation the success probability of an algorithm can be amplified without increasing the order of the complexity,
getting around the semi-tightness of the \cite{LMRSS11} characterization.

In this work, we show a perfect composition theorem $\ADV_{rel}^\pm(f \circ g^n) = \ADV_{rel}^\pm(f) \ADV^\pm(g)$ for the composition of a relation 
$f \subseteq \{0,1\}^n \times [K]$ with Boolean function $g$.  If $f$ is efficiently verifiable this implies that $Q(f \circ g^n) = \Theta( \ADV_{rel}^\pm(f) \ADV^\pm(g) )$. 
The lower-bound part of this theorem was required to show a lower bound on the runtime of a quantum algorithm constructing a cut sparsifier of a graph \cite{AW19}.
The perfect adversary composition theorem for functions has been a very useful tool, both for constructing algorithms and showing lower bounds, and we believe our composition 
theorem for relations will find additional applications as well.

\section{Preliminaries}
For a positive integer $m$ we let $[m] = \{1, \ldots, m\}$.  For two matrices $A,B$ of the same size, $A \circ B$ denotes 
the Hadamard or entrywise product: $(A \circ B)(x,y) = A(x,y) B(x,y)$.   We use $\|A\|$ to denote the 
spectral norm of a matrix $A$ and $\|A\|_{tr}$ to denote the trace norm.  For a symmetric matrix $A$, we use $\lambda_{max}(A)$ and  
$\lambda_{min}(A)$
to denote the largest and smallest eigenvalues of $A$, respectively.  We use $I_n$ for the $n$-by-$n$ identity matrix, and drop the 
subscript when the dimension is implied from context.
For a vector $v \in \R^n$ and natural numbers 
$a \le b$ we let $v(a:b) \in \R^{b-a+1}$ be the vector $(v(a), v(a+1), \ldots, v(b))$.  

We will need a few simple facts about positive semidefinite matrices.
First, if $A, B\succeq 0$, then $A\otimes B\succeq 0$.  Also, a principal submatrix of a positive semidefinite matrix is positive semidefinite.
A matrix $\tilde A$ obtained from $A$ by duplicating rows and columns is again positive semidefinite.
The last two facts follow from the following more general observations:

\begin{fact}
\label{fact:duplication}
Let $A\succeq 0$ be an $M$-by-$M$ matrix, and $h \colon [N] \to [M]$ a function.
Then the $N$-by-$N$ matrix $\tilde A$ defined by $\tilde A(x,y) = A(h(x), h(y))$ is also positive semidefinite.
\end{fact}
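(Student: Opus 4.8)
The plan is to exhibit $\tilde A$ as a congruence transform of $A$, which makes positive semidefiniteness immediate. Concretely, I would introduce the $M$-by-$N$ zero-one matrix $P$ with $P(i,x) = 1$ exactly when $h(x) = i$, and $P(i,x)=0$ otherwise. A direct computation then gives, for all $x,y \in [N]$,
\[
(P^{\T} A P)(x,y) = \sum_{i,j \in [M]} P(i,x)\, A(i,j)\, P(j,y) = A(h(x), h(y)) = \tilde A(x,y),
\]
so $\tilde A = P^{\T} A P$. Hence for every $v \in \R^N$ we have $v^{\T} \tilde A v = (Pv)^{\T} A (Pv) \ge 0$, using $A \succeq 0$, and therefore $\tilde A \succeq 0$.

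An equivalent route, which I might present instead or in addition because it is perhaps the cleanest, is via the Gram-matrix characterization: $A \succeq 0$ iff there exist vectors $u_1, \dots, u_M$ (say in $\R^M$, obtained from a factorization $A = B^{\T} B$) with $A(i,j) = \langle u_i, u_j \rangle$. Setting $w_x := u_{h(x)}$ for $x \in [N]$, we get $\tilde A(x,y) = A(h(x),h(y)) = \langle u_{h(x)}, u_{h(y)} \rangle = \langle w_x, w_y \rangle$, so $\tilde A$ is itself a Gram matrix and hence positive semidefinite. This is really the same argument as the first one, just phrased with vectors rather than with the selection matrix $P$.

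There is no genuine obstacle here; the only care needed is the indexing bookkeeping in the matrix product $P^{\T} A P$ (or, in the Gram version, noting that a finite-dimensional inner product space suffices). Once the fact is in hand, the two special cases mentioned in the surrounding text drop out immediately: taking $h$ to be an injection recovers that a principal submatrix of a positive semidefinite matrix is positive semidefinite, and taking $h$ to be a surjection that is not injective recovers that duplicating rows and columns of a positive semidefinite matrix yields a positive semidefinite matrix.
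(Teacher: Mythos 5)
Your proof is correct. The paper states this fact without proof (treating it as a standard observation, and using it to justify the closure of positive semidefiniteness under taking principal submatrices and under duplicating rows and columns), so there is no author argument to compare against. Both of your routes are the natural ones: the congruence $\tilde A = P^{\T} A P$ with the selection matrix $P$, and the Gram-matrix reformulation via $w_x = u_{h(x)}$, are standard, equivalent, and complete; either would serve as a proof here.
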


\begin{fact}
\label{fact:simple}
If $A,B$ are the same size and $A \succeq 0, B \succeq 0$ 
then $A \circ B \succeq 0$.  Similarly, if $A \succeq 0, B \preceq 0$ then $A \circ B \preceq 0$.
\end{fact}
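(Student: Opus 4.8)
To prove \cref{fact:simple} the plan is to realize the Hadamard product $A\circ B$ as a principal submatrix of the Kronecker product $A\otimes B$, and then invoke two facts already recorded at the start of the Preliminaries: that $A\otimes B\succeq 0$ whenever $A\succeq 0$ and $B\succeq 0$, and \cref{fact:duplication}, which in particular says that passing to a principal submatrix preserves positive semidefiniteness.

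In detail, take $A,B$ to be $n$-by-$n$ and index the rows and columns of $A\otimes B$ by pairs in $[n]\times[n]$, so that $(A\otimes B)\bigl((i,j),(k,l)\bigr)=A(i,k)\,B(j,l)$. Let $h\colon[n]\to[n]\times[n]$ be the injection $h(i)=(i,i)$. Then the matrix obtained from $A\otimes B$ via $h$ as in \cref{fact:duplication} has $(i,k)$-entry equal to $A(i,k)B(i,k)=(A\circ B)(i,k)$, i.e.\ it is exactly $A\circ B$. Consequently, if $A\succeq 0$ and $B\succeq 0$ then $A\otimes B\succeq 0$, and \cref{fact:duplication} yields $A\circ B\succeq 0$. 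For the second claim, if $A\succeq 0$ and $B\preceq 0$ then $-B\succeq 0$, so the first claim gives $A\circ(-B)\succeq 0$; since $A\circ(-B)=-(A\circ B)$, this is precisely the statement $A\circ B\preceq 0$.

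There is no genuine obstacle here — \cref{fact:simple} is the Schur product theorem — and the only points requiring a moment of care are choosing the right principal submatrix of $A\otimes B$ (the one on the diagonal pairs $\{(i,i)\}$) and tracking the sign in the second half. Should one prefer to avoid appealing to the tensor-product fact, an equally short self-contained route is to fix a Gram decomposition $B=\sum_k w_kw_k^{\ast}$ and verify directly that for every vector $z$ one has $z^{\ast}(A\circ B)z=\sum_k y_k^{\ast}Ay_k\ge 0$, where $y_k$ is the vector with entries $y_k(i)=z(i)\,\overline{w_k(i)}$; but since the ingredients are already in hand, the submatrix argument is the cleaner one to present.
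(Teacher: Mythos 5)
The paper states \cref{fact:simple} without proof --- it is offered as one of a handful of background facts in the Preliminaries --- so there is no paper proof against which to compare. Your argument is correct: realizing $A\circ B$ as the image of $A\otimes B$ under $h(i)=(i,i)$ in \cref{fact:duplication} gives $A\circ B\succeq 0$ from $A\otimes B\succeq 0$, and the second half follows by replacing $B$ with $-B$. This is the natural route given that both the tensor-product fact and \cref{fact:duplication} are already recorded in the Preliminaries, and it is exactly what the paper's phrasing suggests as the intended derivation. Your alternative Gram-decomposition argument is also a correct, self-contained proof of the Schur product theorem; it trades reliance on \cref{fact:duplication} for an explicit computation, but buys nothing extra here since the required tools are already in place.
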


\begin{fact}
\label{fact:lmax}
Let $A$ be a symmetric matrix.  Then $\lambda_{max}(A) \cdot I \succeq A \succeq \lambda_{min}(A) \cdot I$.  In particular, 
$\lambda_{max}(A) \ge v^T A v \ge \lambda_{min}(A)$ for any unit vector $v$.  
\end{fact}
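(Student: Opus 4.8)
The statement is a standard consequence of the spectral theorem (it is essentially the Rayleigh quotient bound), so the plan is short. First I would invoke the spectral decomposition: since $A$ is symmetric we may write $A = U \Lambda U^T$, where $U$ is orthogonal and $\Lambda = \mathrm{diag}(\lambda_1, \ldots, \lambda_n)$ lists the eigenvalues of $A$, each of which lies in $[\lambda_{min}(A), \lambda_{max}(A)]$ by definition. Then $\lambda_{max}(A)\cdot I - A = U\,(\lambda_{max}(A)\cdot I - \Lambda)\,U^T$, and the middle factor is diagonal with nonnegative entries $\lambda_{max}(A) - \lambda_i \ge 0$, hence positive semidefinite; since conjugating a positive semidefinite matrix by an orthogonal matrix (more generally, a congruence by any invertible matrix) preserves positive semidefiniteness, we get $\lambda_{max}(A)\cdot I \succeq A$. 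The inequality $A \succeq \lambda_{min}(A)\cdot I$ is obtained in exactly the same way from $A - \lambda_{min}(A)\cdot I = U\,(\Lambda - \lambda_{min}(A)\cdot I)\,U^T$, whose middle factor is diagonal with entries $\lambda_i - \lambda_{min}(A) \ge 0$.

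For the second part of the statement I would use the defining property of the semidefinite ordering, namely that $M \succeq 0$ if and only if $v^T M v \ge 0$ for every vector $v$. Applying this to $M = \lambda_{max}(A)\cdot I - A$ and a unit vector $v$ gives $0 \le v^T\big(\lambda_{max}(A)\cdot I - A\big) v = \lambda_{max}(A) - v^T A v$, that is, $\lambda_{max}(A) \ge v^T A v$; the bound $v^T A v \ge \lambda_{min}(A)$ follows symmetrically from $A - \lambda_{min}(A)\cdot I \succeq 0$. There is no substantive obstacle here: the only points needing any care are fixing the convention that $X \succeq Y$ abbreviates the assertion that $X - Y$ is positive semidefinite, and recalling that orthogonal conjugation preserves the positive semidefinite cone — both of which are used silently throughout the rest of the paper anyway.
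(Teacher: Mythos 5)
Your proof is correct, and it is the standard argument. The paper states \cref{fact:lmax} without proof (it is offered as a background fact alongside the other preliminary observations about positive semidefinite matrices), so there is nothing to compare against; your spectral-decomposition argument and the resulting Rayleigh-quotient bound are precisely the routine details the authors chose to omit.
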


\subsection{Quantum query complexity}
The bounded-error quantum query complexity of the function $g: \{0,1\}^n \rightarrow \{0,1\}$, denoted $Q(g)$, is the
minimum number of queries needed by a quantum algorithm that outputs $g(x)$ with probability at least $2/3$
for every input $x \in \{0,1\}^n$.  Let $K$ be a positive integer and $f \subseteq \{0,1\}^n \times [K]$ be a relation.
We say that a quantum algorithm computes $f$ if for every $x \in \{0,1\}^n$ the algorithm outputs an $a \in [K]$ such that 
$(x,a) \in f$ with probability at least $2/3$.  We let $Q(f)$ denote the minimum cost of a quantum query algorithm that computes $f$.

We will always assume that for every $x \in \{0,1\}^n$ there exists $a \in [K]$ such that $(x,a) \in f$.
This assumption is without loss of generality as if $f$ defined on a strict subset $S$ of $\{0,1\}^n$ we can consider instead 
$f'$ which is defined as $f$ together with every pair $(x,a) \in (\{0,1\}^n\setminus S) \times [K]$.
In this way any algorithm that computes $f$ on $S$ also computes $f'$ on $\{0,1\}^n$, as any output is accepted for inputs not in $S$.

\subsection{Adversary bound}
\begin{definition}[Adversary matrix for a function]
Let $g: \{0,1\}^n \rightarrow \{0,1\}$ be a Boolean function.  Let $\Gamma_g \in \R^{2^n \times 2^n}$, with rows and columns labeled by 
elements of $\{0,1\}^n$.  We say that $\Gamma_g$ is a functional adversary matrix for $g$ 
iff $\Gamma_g$ is symmetric and $\Gamma_g(x,y) = 0$ for all $x,y$ such that $g(x) = g(y)$.
\end{definition}

Up to a permutation of rows and columns, we may always assume that the first $|g^{-1}(0)|$ rows and columns of a functional adversary 
matrix $\Gamma_g$ are 
labeled by elements of $g^{-1}(0)$.  Then there is a $|g^{-1}(0)|$-by-$|g^{-1}(1)|$ matrix $Z$ such that 
\[
\Gamma_g = 
\begin{bmatrix}
0 & Z \\
Z^T & 0
\end{bmatrix}
\enspace.
\]
In this paper we always assume that adversary matrices for functions are presented in this form.

\begin{definition}[Functional adversary bound \cite{HLS07}]
\label{def:adversary}
Let $g: \01^n \rightarrow \{0,1\}$ be a Boolean function.  For $i \in [n]$, let $D_i$ be a $2^n$-by-$2^n$ Boolean matrix 
where $D_i(x,y) = 1$ iff $x_i \ne y_i$.
The adversary bound $\ADV^{\pm}(g)$ for $g$ is defined as 
\begin{equation*}
\begin{aligned}
\ADV^\pm(g) = & \underset{\Gamma}{\text{maximize}}
& & \|\Gamma\| \\
& \text{subject to}
& & \|\Gamma \circ D_i \| \leq 1 \mbox{ for all } i = 1, \ldots, n \\
& & & \Gamma(x,y) = 0 \mbox{ for all } x,y \in \{0,1\}^n \mbox{ with } g(x) = g(y) \enspace .
\end{aligned}
\end{equation*}
\end{definition}

We will also need the dual formulation of the adversary bound.  
\begin{theorem}[\cite{LMRSS11} Theorem 3.4]
\label{thm:adv_dual}
Let $g: \01^n \rightarrow \{0,1\}$ be a Boolean function.  Let $\{u_{x,i}\}, \{v_{x,i}\}$ be two families of 
vectors of arbitrary finite dimension indexed by $x \in \{0,1\}^n$ and $i \in [n]$.
\begin{equation*}
\begin{aligned}
\ADV^\pm(g) = & \underset{\{u_{x,i}\},  \{v_{x,i}\} }{\text{minimize}}
& & \max \left\{ \max_{x \in \{0,1\}^n} \sum_{i \in [n]} \| u_{x,i}\|^2, \max_{x \in \{0,1\}^n} \sum_{i \in [n]} \| v_{x,i}\|^2 \right\} \\
& \text{subject to}
& & \sum_{i: x_i \ne y_i} \langle u_{x,i}, v_{y, i} \rangle = 
\begin{cases}
1 & \mbox{ if } g(x) \ne g(y) \\
0 & \mbox{ if } g(x) = g(y)
\end{cases}
\mbox{ for all } x,y \in \{0,1\}^n \\
\end{aligned}
\end{equation*}
\end{theorem}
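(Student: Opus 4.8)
The statement is the semidefinite dual of the program defining $\ADV^\pm(g)$ in \cref{def:adversary}, so the plan is to obtain it by conic (SDP) duality. The one wrinkle is that the primal objective $\sn{\Gamma}$ is not linear. To get around this, first note that the feasible region of \cref{def:adversary} is invariant under $\Gamma\mapsto-\Gamma$, so we may replace $\sn{\Gamma}$ by $\lambda_{max}(\Gamma)$, and then, using \cref{fact:lmax}, write $\lambda_{max}(\Gamma) = \max\{\langle\Gamma,\rho\rangle : \rho\succeq 0,\ \Tr\rho = 1\}$. Thus $\ADV^\pm(g) = \max_{\rho}\phi(\rho)$, where for each fixed density matrix $\rho$ the quantity $\phi(\rho)$ is the value of a genuine SDP in $\Gamma$: maximize $\langle\Gamma,\rho\rangle$ subject to $\Gamma\circ D_i\preceq I$ and $-\Gamma\circ D_i\preceq I$ for all $i$, and $\Gamma(x,y)=0$ whenever $g(x)=g(y)$. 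I would then dualize this inner SDP and recombine with the outer maximization over $\rho$.

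The ``easy'' inequality --- that $\ADV^\pm(g)$ is at most the claimed minimum --- I would verify by a direct calculation, which also pins down that the two programs really are dual. Take any primal-feasible $\Gamma$ and any dual-feasible $\{u_{x,i}\},\{v_{x,i}\}$, and pick unit vectors $\alpha,\beta$ with $\sn{\Gamma} = \sum_{x,y}\Gamma(x,y)\alpha_x\beta_y$. Since $\Gamma(x,y)=0$ whenever $g(x)=g(y)$, while $\sum_{i:x_i\ne y_i}\langle u_{x,i},v_{y,i}\rangle = 1$ whenever $g(x)\ne g(y)$, we may insert this sum at no cost:
\[
\sn{\Gamma} \;=\; \sum_i\sum_{x,y}(\Gamma\circ D_i)(x,y)\,\langle\alpha_x u_{x,i},\beta_y v_{y,i}\rangle \;=\; \sum_i\langle\Gamma\circ D_i,\ \widetilde U_i\widetilde V_i^T\rangle,
\]
where $\widetilde U_i,\widetilde V_i$ are the matrices with $x$-th rows $\alpha_x u_{x,i}^T$ and $\beta_x v_{x,i}^T$. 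Now the Schatten-norm bound $|\langle A,B\rangle|\le\sn{A}\,\|B\|_{tr}$, together with $\sn{\Gamma\circ D_i}\le 1$ and $\|\widetilde U_i\widetilde V_i^T\|_{tr}\le\|\widetilde U_i\|_F\|\widetilde V_i\|_F$, gives $\sn{\Gamma}\le\sum_i\|\widetilde U_i\|_F\|\widetilde V_i\|_F$; one Cauchy--Schwarz over $i$ and the normalizations $\sum_x\alpha_x^2=\sum_y\beta_y^2=1$ then yield $\sn{\Gamma}\le\sqrt{\big(\max_x\sum_i\sn{u_{x,i}}^2\big)\big(\max_x\sum_i\sn{v_{x,i}}^2\big)}$, which is at most $\max\{\max_x\sum_i\sn{u_{x,i}}^2,\ \max_x\sum_i\sn{v_{x,i}}^2\}$.

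The hard part is the reverse inequality: showing there is no duality gap and that the minimum is attained. For this I would check Slater's condition for the primal SDP --- after viewing the requirement $\Gamma(x,y)=0$ on $g(x)=g(y)$ entries as a linear constraint, $\Gamma=0$ lies strictly inside the remaining inequality constraints since $\sn{0\circ D_i}=0<1$ --- so strong duality and attainment hold for each $\phi(\rho)$, and the outer supremum over $\rho$ is attained because the set of density matrices is compact. It then remains to carry out the dualization of the inner SDP (whose variables are a pair of positive semidefinite matrices per coordinate $i$, dual to the two inequalities $\pm\,\Gamma\circ D_i\preceq I$), recombine with the optimal $\rho$, and pass from the resulting positive semidefinite matrices to the vector families $\{u_{x,i}\},\{v_{x,i}\}$ via Gram decompositions, checking that the dual objective and the constraint $\sum_{i:x_i\ne y_i}\langle u_{x,i},v_{y,i}\rangle = [g(x)\ne g(y)]$ fall out. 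I expect the main obstacle to be organizing these several families of dual variables so that the recombination collapses cleanly into the two families in the statement; alternatively, since this is precisely \cite[Theorem~3.4]{LMRSS11}, one may simply invoke it.
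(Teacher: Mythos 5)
The paper does not prove this statement; it cites it from \cite{LMRSS11} and only comments on its structure. The comment is crucial here, because it flags exactly the step your plan misses: the paper notes that ``if one simply takes the dual of \cref{def:adversary} one gets the above optimization problem \emph{without} the constraint on $x,y$ pairs where $g(x)=g(y)$,'' and that \cite{LMRSS11} must separately show this extra constraint can be imposed without increasing the objective.

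Your weak-duality direction is fine. Writing $\sn{\Gamma}=\sum_{x,y}\Gamma(x,y)\alpha_x\beta_y$, inserting $\sum_{i:x_i\ne y_i}\langle u_{x,i},v_{y,i}\rangle$ at no cost (which, as you note, only uses the constraint on pairs with $g(x)\ne g(y)$, since $\Gamma$ vanishes on the others), bounding term by term via $|\langle A,B\rangle|\le\sn{A}\norm{B}_{tr}$, and finishing with Cauchy--Schwarz is exactly the standard argument and gives $\sn{\Gamma}\le\max\{\max_x\sum_i\sn{u_{x,i}}^2,\max_x\sum_i\sn{v_{x,i}}^2\}$.

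The gap is in the other direction. When you write the primal SDP for $\phi(\rho)$, the equality constraints $\Gamma(x,y)=0$ on the $g(x)=g(y)$ entries produce free dual multipliers $\mu(x,y)$ on exactly those entries. Consequently, after dualization the matrix constraint $\sum_i D_i\circ(X_i-Y_i)+\mu=\rho$ pins down only the entries with $g(x)\ne g(y)$; on $g(x)=g(y)$ entries $\mu$ absorbs whatever is there and no condition survives. Therefore, after passing to Gram vectors, conic duality gives you only $\sum_{i:x_i\ne y_i}\langle u_{x,i},v_{y,i}\rangle=1$ when $g(x)\ne g(y)$, with the sum unconstrained when $g(x)=g(y)$. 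Your proposal assumes the $=0$ case ``falls out'' of the recombination, but it does not; it is a genuinely smaller feasible region, and a separate argument (this is the content of Theorem~3.4 of \cite{LMRSS11}, beyond vanilla SDP duality) is required to show that one can project or otherwise modify an optimal dual solution to also satisfy $\sum_{i:x_i\ne y_i}\langle u_{x,i},v_{y,i}\rangle=0$ for $g(x)=g(y)$ without increasing the objective. Without that step your proof only establishes the weaker dual without the $g(x)=g(y)$ constraint --- which is precisely the distinction the paper's prose is warning about, and the constraint that is actually used later in \cref{thm:upper}. Invoking \cite{LMRSS11} outright, as you mention at the end, is of course fine and is what the paper itself does.
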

If one simply takes the dual of \cref{def:adversary} one gets the above optimization problem without the constraint 
on $x,y$ pairs where $g(x) = g(y)$.  These additional constraints are needed to show the upper bound in the composition 
theorem, and \cite{LMRSS11} show they can be added without increasing the objective value of the program.

The functional adversary bound characterizes the bounded-error quantum 
query complexity of any function $g$.  The lower bound is due to H{\o}yer et al. \cite{HLS07} and 
the upper bound due to Reichardt \cite{Rei11}.
\begin{theorem}[\cite{HLS07},\cite{Rei11}]
\label{thm:adv_char}
Let $g: \01^n \rightarrow \{0,1\}$ be a Boolean function.  Then $Q(g) = \Theta(\ADV^{\pm}(g))$
\end{theorem}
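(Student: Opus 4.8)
The plan is to prove $Q(g)=\Omega(\ADV^\pm(g))$ and $Q(g)=O(\ADV^\pm(g))$ separately: the first is the negative-weight adversary lower bound and follows from a progress argument, and the second is Reichardt's algorithmic upper bound and follows by turning an optimal dual adversary solution into a quantum algorithm that phase-estimates a reflection operator (equivalently, evaluates a span program). For the lower bound I would fix an optimal feasible adversary matrix $\Gamma$ for \cref{def:adversary}: symmetric, $\Gamma(x,y)=0$ whenever $g(x)=g(y)$, $\|\Gamma\circ D_i\|\le 1$ for all $i$, and $\|\Gamma\|=\ADV^\pm(g)$. Let $\ket{\delta}$ be a unit eigenvector of $\Gamma$ with eigenvalue $\pm\|\Gamma\|$, and consider a hypothetical $T$-query algorithm for $g$ with error $\le 1/3$; write $\ket{\psi^t_x}$ for its state on input $x$ after $t$ queries, let $\rho_t(x,y)=\braket{\psi^t_x}{\psi^t_y}$, and define the progress measure $W_t=\bra{\delta}(\Gamma\circ\rho_t)\ket{\delta}$ (with the appropriate complex conjugation). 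Three observations close the argument: (i) $\rho_0$ is the all-ones matrix, so $W_0=\bra{\delta}\Gamma\ket{\delta}=\pm\|\Gamma\|$; (ii) a query is a controlled permutation that only affects coordinates where the queried bit of $x$ and $y$ differ, which gives $|W_{t+1}-W_t|\le 2\max_i\|\Gamma\circ D_i\|\le 2$; (iii) since the algorithm distinguishes inputs of different $g$-value with bias $\ge 1/3$, the final states of such $x,y$ satisfy $|\rho_T(x,y)|\le 2\sqrt{(1/3)(2/3)}$, and since $\Gamma$ is supported only on such pairs, standard manipulations bound $|W_T|\le (2\sqrt2/3)\|\Gamma\|$. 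Combining, $2T\ge|W_0-W_T|\ge(1-2\sqrt2/3)\|\Gamma\|$, so $T=\Omega(\ADV^\pm(g))$.

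For the upper bound I would work with the dual. By SDP duality the optimum of \cref{thm:adv_dual} equals $\ADV^\pm(g)$, so fix an optimal family $\{u_{x,i}\},\{v_{x,i}\}$ of value $W:=\ADV^\pm(g)$; crucially these satisfy the inner-product constraints for \emph{all} pairs $x,y$, including those with $g(x)=g(y)$. Following Reichardt and \v{S}palek, package the $u_{x,i}$ together with a fixed target vector into a span program computing $g$ — equivalently a pair of orthogonal projectors — whose positive and negative witness sizes are both $O(W)$; the $g(x)=g(y)$ constraints are precisely what forces the span program to compute $g$ rather than a relaxation of it. Then form the input-dependent product of reflections $U_x=(2\Pi_A-I)(2\Pi_{B,x}-I)$, one application of the input-dependent factor costing $O(1)$ queries, and run phase estimation of $U_x$ on the target state with precision $\Theta(1/W)$. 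Reichardt's effective spectral gap lemma shows that the phase-estimation outcome near $0$ distinguishes $g(x)=1$ from $g(x)=0$ with constant bias; thus $O(W)$ queries suffice, and amplifying the bias to $2/3$ costs only a constant factor, yielding $Q(g)=O(\ADV^\pm(g))$.

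The main obstacle is the upper bound; the lower bound is routine once the progress measure $W_t$ is in place. The work in the upper bound is twofold: (a) converting a dual adversary solution into a span program that genuinely computes $g$, which is exactly where the extra $g(x)=g(y)$ constraints of \cref{thm:adv_dual} are used and where a naive dualization of \cref{def:adversary} would fail; and (b) the effective spectral gap lemma, i.e.\ showing that in the relevant $U_x$-invariant subspace there is no eigenphase of magnitude $o(1/W)$ other than the witness direction, so that the phase-estimation step is clean. Step (b) — a careful spectral analysis of a product of two reflections via Jordan's lemma — is the technical heart of the argument.
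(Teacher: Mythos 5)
The paper does not prove this theorem; it is cited verbatim from \cite{HLS07} (lower bound) and \cite{Rei11} (upper bound), so there is no in-paper proof to compare against. Your sketch is a faithful high-level account of how the theorem is actually proved in those references: the lower bound is the progress-measure argument with $W_t=\bra{\delta}(\Gamma\circ\rho_t)\ket{\delta}$, bounded change per query by $\max_i\|\Gamma\circ D_i\|$, and a small final value because $\Gamma$ is supported on pairs $g(x)\ne g(y)$ whose final states are nearly distinguishable; the upper bound converts a dual solution into a span program and runs phase estimation on a product of two reflections, with the effective spectral gap lemma (Jordan's lemma) carrying the technical weight. One small misattribution: the extra constraints in \cref{thm:adv_dual} on pairs with $g(x)=g(y)$ are \emph{not} what forces the span program to compute $g$ correctly; Reichardt's construction from the plain dual of \cref{def:adversary} (which constrains only $g(x)\ne g(y)$ pairs) already yields a span program that computes $g$ with witness size $O(W)$. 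As the paper itself remarks immediately after \cref{thm:adv_dual}, those additional constraints are introduced — and shown by \cite{LMRSS11} to be free — in order to prove the \emph{composition} theorem (\cref{thm:upper} here), not to establish the algorithmic upper bound of \cref{thm:adv_char}. This does not affect the correctness of your argument, only the explanation of why the constraints are present.
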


\section{Adversary bound for relations}
Belovs \cite{Bel15} developed a modification, $\ADV_{rel}^\pm(f)$, of the adversary bound that relates to the bounded-error quantum query complexity 
of a relation $f \subseteq \{0,1\}^n \times [K]$.  To motivate this bound, we first review the \emph{state generation} problem \cite{Shi02, AMRR11,LMRSS11}.  
A state generation problem is specified by a family of vectors $\ket{\sigma_x} \in \R^M$ for each $x \in \{0,1\}^n$.  On input $x$ the algorithm begins in the state $\ket{0}\ket{0}$ 
and the goal is for the algorithm to finish in the target state $\ket{\sigma_x}\ket{0}$ after making as few queries to $x$ as possible.

Lee et al. \cite{LMRSS11} give the following definition and theorem.
\begin{definition}[Filtered $\gamma_2$ norm]
Let $A$ be a $2^n$-by-$2^n$ matrix and $D = \{D_1, \ldots, D_n\}$ where each 
$D_i$ is a $2^n$-by-$2^n$ Boolean matrix defined as $D_i(x,y) = 1$ iff $x_i \ne y_i$.  Define
\begin{equation*}
\begin{aligned}
\gamma_2(A | D) = & \underset{\{u_{x,i}\},  \{v_{x,i}\} }{\text{minimize}}
& & \max \left\{ \max_{x \in \{0,1\}^n} \sum_{i \in [n]} \| u_{x,i}\|^2, \max_{x \in \{0,1\}^n} \sum_{i \in [n]} \| v_{x,i}\|^2 \right\} \\
& \text{subject to}
& & \sum_{i \in [n]} \langle u_{x,i}, v_{y, i} \rangle \cdot D_i(x,y) = A(x,y) \mbox{ for all } x,y \in \{0,1\}^n \\
\end{aligned}
\end{equation*}
\end{definition}

\begin{theorem}[\cite{LMRSS11}]
\label{thm:lmrss}
Let $M,n$ be positive integers and $\{\ket{\sigma_x}\}_{x \in \{0,1\}^n}$ be a family of vectors with each $\ket{\sigma_x} \in \R^M$.  Let $A$ 
be a $2^n$-by-$2^n$ matrix where $A(x,y) = 1- \braket{\sigma_x}{\sigma_y}$ for all $x,y \in \{0,1\}^n$.  Let $D = \{D_1, \ldots, D_n\}$ where each 
$D_i$ is a $2^n$-by-$2^n$ Boolean matrix defined as $D_i(x,y) = 1$ iff $x_i \ne y_i$.
For any $0 < \epsilon < \gamma_2(A | D)$ there is a quantum algorithm that for every $x \in \{0,1\}^n$ terminates in a state $\ket{\sigma_x'}$ satisfying 
$\langle  \sigma_x' | (|\sigma_x \rangle \otimes | 0 \rangle) \ge \sqrt{1-\epsilon}$ after making $O( \gamma_2(A | D) \frac{\log(1/\epsilon)}{\epsilon^2})$ 
many queries to $x$.
\end{theorem}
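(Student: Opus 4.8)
The natural approach here---essentially the one of \cite{LMRSS11}, itself a generalisation of Reichardt's span-program algorithm \cite{Rei11}---is to convert an (almost) optimal solution of the filtered $\gamma_2$ program into a product of two reflections, one input-independent and one costing a single query, whose ``phase-zero'' eigenspace encodes the target state $\ket{\sigma_x}\otimes\ket0$, and then to extract that eigenspace by phase estimation. \emph{Step 1 (normalise the witness).} Fix a small slack and take $\{u_{x,i}\},\{v_{x,i}\}$ feasible for $\gamma_2(A\mid D)$ with objective value at most $W:=(1+o(1))\,\gamma_2(A\mid D)$. Rescaling $u_{x,i}\mapsto c\,u_{x,i}$, $v_{x,i}\mapsto c^{-1}v_{x,i}$ equalises the two maxima in the objective, so $\sum_i\sn{u_{x,i}}^2\le W$ and $\sum_i\sn{v_{x,i}}^2\le W$ for every $x$. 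Putting $x=y$ in the constraint and using $D_i(x,x)=0$ gives $A(x,x)=0$, i.e.\ $\braket{\sigma_x}{\sigma_x}=1$: each target is already a unit vector, which is what allows the final fidelity to be driven to $1-\epsilon$ rather than merely approached.

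\emph{Step 2 (the two reflections).} Work in a Hilbert space with a block register over $\{\star\}\cup[n]$, a query qubit, and a memory register of suitable dimension. For each $x$ form the history vector
\[
\ket{w_x}\;=\;\tfrac{1}{\sqrt W}\,\ket{\star}\ket0\ket{\sigma_x}\;+\;\sum_{i\in[n]}\ket{i}\ket{x_i}\ket{u_{x,i}},
\]
let $R$ be the (input-independent) reflection about $\operatorname{span}\{\ket{w_x}:x\in\01^n\}$ inside an appropriate invariant subspace, and let $O_x=\sum_i\ketbra{i}{i}\otimes\bigl(I-2\ketbra{1-x_i}{1-x_i}\bigr)\otimes I$ be the query reflection, which costs one query to $x$. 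Set $U_x=O_xR$. Using the feasibility equation in the form $\sum_{i:x_i\ne y_i}\langle u_{x,i},v_{y,i}\rangle=1-\braket{\sigma_x}{\sigma_y}$ together with the Step~1 norm bounds, one checks two structural facts: (i) there is an explicit $+1$-eigenvector $\ket{t_x}$ of $U_x$ whose $\star$-block is a $\Theta(1/\sqrt W)$ multiple of $\ket0\ket{\sigma_x}$; and (ii) a fixed, input-independent source state $\ket s$ (the scaled $\ket{\star}\ket0\ket0$) equals $\ket{t_x}$ plus a vector orthogonal to the entire $+1$ eigenspace of $U_x$. Morally, then, applying the phase-zero projector of $U_x$ to $\ket s$ recovers the target.

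\emph{Step 3 (extract by phase estimation and amplify).} The crux is the \emph{effective spectral gap lemma}: the component of $\ket s$ on eigenvectors of $U_x$ with eigenphase in $(0,\theta]$ has norm $O(\theta W)$. Granting it, prepare $\ket s$, run phase estimation of $U_x$ to precision $\theta$ (cost $O(1/\theta)$ queries and $O(\log(1/\theta)+\log(1/\epsilon))$ ancillas), and post-select on the phase register reading $\approx 0$; by (i)--(ii) and the lemma the post-selected state agrees with $\ket{t_x}/\sn{t_x}$ up to an error governed by $O(\theta W)$ plus the usual phase-estimation tails (the tails being suppressed by $O(\log(1/\epsilon))$ extra ancillas, the source of the $\log(1/\epsilon)$ factor), and its renormalised $\star$-block is correspondingly close to $\ket0\ket{\sigma_x}$. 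Finally boost the success amplitude to $1-\epsilon$ with fixed-point amplitude amplification and uncompute the phase register, leaving a state $\ket{\sigma_x'}$ with $\bra{\sigma_x'}(\ket{\sigma_x}\otimes\ket0)\ge\sqrt{1-\epsilon}$. A careful but routine optimisation of $\theta$, the repetition count, and the number of amplification rounds against the total error budget---with the effective spectral gap estimate as the key input---yields the stated query count $O\bigl(\gamma_2(A\mid D)\,\log(1/\epsilon)/\epsilon^2\bigr)$; the comparatively weak $1/\epsilon^2$ scaling is precisely the ``semi-tightness'' referred to in the introduction.

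\emph{Main obstacle.} Everything except one block is bookkeeping: implementing the two reflections, phase estimation, fixed-point amplitude amplification, and composing the $O(\sqrt\epsilon)$-type errors into the final fidelity and query bound. The genuinely delicate part is the effective spectral gap lemma together with fact (i) of Step~2 ---that $\ket{t_x}$ really is a $+1$-eigenvector of $U_x$ with the right $\star$-block. Both are purely linear-algebraic consequences of the $\gamma_2$ feasibility equations and the Step~1 norm bounds, but proving them cleanly requires decomposing the space via Jordan's lemma into $U_x$-invariant one- and two-dimensional blocks and showing that $\ket s$ places only $O(\theta W)$-small weight on the small-angle blocks; identifying the correct ``dual'' quantity built from the $v_{x,i}$'s to certify this is the heart of the matter. (An alternative implementation replaces phase estimation by a Trotterised adiabatic/quantum-walk evolution, but it rests on the same spectral-gap analysis.)
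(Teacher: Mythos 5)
The paper does not give a proof of \cref{thm:lmrss}; it is a black-box citation of \cite{LMRSS11}, so there is no in-paper argument to compare against. Your outline tracks the cited proof faithfully in its essential architecture: normalise a near-optimal feasible solution of the filtered $\gamma_2$ program (noting that feasibility at $x=y$ forces each $\ket{\sigma_x}$ to be a unit vector), build an input-independent reflection $R$ and a one-query reflection $O_x$, exhibit a $+1$-eigenvector of $U_x=O_xR$ whose $\star$-block recovers $\ket{\sigma_x}\otimes\ket 0$, invoke the effective spectral gap lemma to control the source state's overlap with small-phase eigenvectors, and finish with phase estimation and fixed-point amplitude amplification. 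You also correctly identify the effective spectral gap lemma and the eigenvector structure of $U_x$ as the genuinely technical pieces. But you defer exactly those two pieces (``one checks,'' ``granting it''), so what you have is an accurate roadmap rather than a proof. Two small cautions: the $\Theta(1/\sqrt W)$ you attribute to the $\star$-block of the unit eigenvector does not obviously come out of the $\ket{w_x}$ you wrote down once you account for the ancilla blocks carrying $\Theta(W)$ of the squared norm, so that normalisation deserves a second look; and the bookkeeping that yields the $O(\log(1/\epsilon)/\epsilon^2)$ overhead --- the source of the semi-tightness discussed in the introduction --- is subtler than a routine union bound, since it is precisely where \cite{LMRSS11} lose a factor relative to the lower bound, and waving it off risks hiding the most delicate error composition in the algorithm.
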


In computing a relation $f \subseteq \{0,1\}^n \times [K]$, there is not a fixed ideal target state; rather the algorithm has the freedom to optimize over a set of 
target states which we call perfect target states for $f$.  
\begin{definition}[Perfect target states]
Let $f \subseteq \{0,1\}^n \times [K]$ and $\{\ket{\psi_x}\}_{x \in \{0,1\}^n}$ a family of unit vectors.  We say that $\{\ket{\psi_x}\}_{x \in \{0,1\}^n}$ are \emph{perfect 
target states} for $f$ if there exists a complete family of orthogonal projectors $\{\Pi_a\}_{a \in [K]}$ such that each $\ket{\psi_x}$ can be 
decomposed as $\ket{\psi_x} = \sum_{a: (x,a) \in f} \ket{\sigma_{x,a}}$ for some (un-normalized) vectors $\ket{\sigma_{x,a}}$ satisfying $\Pi_a \ket{\sigma_{x,a}} = \ket{\sigma_{x,a}}$ 
for all $x,a$.
\end{definition}

Let $\{ \ket{\psi_x}\}_{x \in \01^n}$ be a family of perfect target states for $f$ and suppose there is a state generation algorithm that on input $x$ terminates in the 
state $\ket{\psi_x}\ket{0}$ for every $x \in \{0,1\}^n$.
If the algorithm measures according to the projectors $\{\Pi_a \otimes I\}_{a \in [K]}$, for the projectors $\{\Pi_a\}_{a \in [K]}$ witnessing the perfection of 
$\{ \ket{\psi_x}\}_{x \in \01^n}$, then for every $x$ it will output an $a$ with $(x,a) \in f$ with certainty.  This motivates studying the optimization problem of minimizing the filtered 
$\gamma_2$ norm $\gamma_2( 1 - [\braket{\psi_x}{\psi_y}]_{x,y \in \{0,1\}^n} | D)$ over all families of perfect target states $\{\ket{\psi_x}\}_{x \in \01^n}$ for $f$.   
This optimization problem gives Belovs' definition of the relational adversary bound.

\begin{definition}[\cite{Bel15} Equation~20]
\label{def:adv_rel}
Let $f \subseteq \{0,1\}^n \times [K]$.
\begin{equation*}
\begin{aligned}
\ADV_{rel}^\pm(f) = & \underset{\{u_{x,i}\},  \{v_{x,i}\}, \{\sigma_{x,a}\} }{\text{minimize}}
& & \max \left\{ \max_{x \in \{0,1\}^n} \sum_{i \in [n]} \| u_{x,i}\|^2, \max_{x \in \{0,1\}^n} \sum_{i \in [n]} \| v_{x,i}\|^2 \right\} \\
& \text{subject to}
& & \sum_{i: x_i \ne y_i} \langle u_{x,i}, v_{y, i} \rangle = 1 - \sum_{a \in [K]} \langle \sigma_{x,a}, \sigma_{y,a} \rangle  \mbox{ for all } x,y \in \{0,1\}^n \\
& & & \| \sigma_{x,a} \|^2 = 0 \mbox{ for all } x,a \mbox{ with } (x,a) \not \in f
\end{aligned}
\end{equation*}
\end{definition}

\begin{theorem}
\label{thm:alg}
Let $f \subseteq \{0,1\}^n \times [K]$.  Then $Q(f) = O(\ADV_{rel}^\pm(f))$.  
\end{theorem}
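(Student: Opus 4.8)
The plan is to extract from the semidefinite program defining $\ADV_{rel}^\pm(f)$ a family of perfect target states for $f$ whose filtered $\gamma_2$ norm is at most $\ADV_{rel}^\pm(f)$, then run the state-generation algorithm of \cref{thm:lmrss} with a constant error parameter and finish with a measurement. The crucial structural observation is that the quantity $1 - \sum_a \langle \sigma_{x,a},\sigma_{y,a}\rangle$ appearing in the equality constraint of \cref{def:adv_rel} is exactly $1 - \braket{\psi_x}{\psi_y}$ for the block-assembled vectors $\ket{\psi_x} := \bigoplus_{a\in[K]} \ket{\sigma_{x,a}}$, so a single feasible solution simultaneously witnesses a bound on the filtered $\gamma_2$ norm and describes the target states.

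First I would fix an optimal solution $\{u_{x,i}\},\{v_{x,i}\},\{\sigma_{x,a}\}$ of the program in \cref{def:adv_rel}, of value $\Lambda=\ADV_{rel}^\pm(f)$, and record the easy facts. Putting $y=x$ in the equality constraint makes its left side an empty sum, giving $\sum_a \sn{\sigma_{x,a}}^2 = 1$, so each $\ket{\psi_x}=\bigoplus_a\ket{\sigma_{x,a}}$ is a unit vector. Letting $\Pi_a$ be the orthogonal projector onto the $a$-th block, the family $\{\Pi_a\}_{a\in[K]}$ is complete and orthogonal, $\Pi_a\ket{\psi_x}=\ket{\sigma_{x,a}}$, and $\ket{\psi_x}=\sum_{a:(x,a)\in f}\ket{\sigma_{x,a}}$ because $\sigma_{x,a}=0$ whenever $(x,a)\notin f$; hence $\{\ket{\psi_x}\}_x$ is a family of perfect target states for $f$ witnessed by $\{\Pi_a\}$. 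Setting $A(x,y) = 1-\braket{\psi_x}{\psi_y}$, the equality constraint becomes $A(x,y) = \sum_{i:x_i\ne y_i}\langle u_{x,i},v_{y,i}\rangle = \sum_{i\in[n]}\langle u_{x,i},v_{y,i}\rangle D_i(x,y)$, so $\{u_{x,i}\},\{v_{x,i}\}$ is feasible for $\gamma_2(A\mid D)$ with objective $\Lambda$, and therefore $\gamma_2(A\mid D)\le\Lambda$.

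Next I would invoke \cref{thm:lmrss} with this $A$ and a fixed constant $\epsilon$, say $\epsilon = 1/9$ (we may assume $\ADV_{rel}^\pm(f)\ge 1$, so that $\epsilon<\gamma_2(A\mid D)$; a smaller value of $\ADV_{rel}^\pm(f)$ arises only in the degenerate case where a single $a$ satisfies $(x,a)\in f$ for all $x$, whence $Q(f)=0$). This produces a quantum algorithm using $O\bigl(\gamma_2(A\mid D)\cdot\log(1/\epsilon)/\epsilon^2\bigr)=O(\Lambda)$ queries which, on input $x$, terminates in a state $\ket{\sigma_x'}$ with $|\langle\sigma_x'\mid(\ket{\psi_x}\otimes\ket{0})\rangle|\ge\sqrt{1-\epsilon}$; the algorithm then measures its first register with $\{\Pi_a\otimes I\}_{a\in[K]}$ and outputs the observed label $a$. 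Measuring the ideal state $\ket{\psi_x}\otimes\ket{0}$ returns $a$ with probability $\sn{\sigma_{x,a}}^2$, a distribution supported entirely on $\{a:(x,a)\in f\}$; since $\ket{\sigma_x'}$ has squared overlap at least $1-\epsilon$ with the ideal state, the two pure states lie at trace distance at most $\sqrt\epsilon$, and measurement is a contraction for trace distance, so the output lands in $\{a:(x,a)\in f\}$ with probability at least $1-\sqrt\epsilon=2/3$. Hence $Q(f)=O(\ADV_{rel}^\pm(f))$.

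I do not expect a serious obstacle: the argument is essentially bookkeeping plus one application of \cref{thm:lmrss}. The step requiring the most care is the final one, passing from the fidelity guarantee between the algorithm's output state and $\ket{\psi_x}\otimes\ket{0}$ to a guarantee on the measured label. The conceptual point worth emphasizing is \emph{why} the semi-tight ($1/\epsilon^2$) character of \cref{thm:lmrss} causes no loss here: because a relation need only be computed with constant success probability, $\epsilon$ may be taken to be an absolute constant, so the $\log(1/\epsilon)/\epsilon^2$ overhead is itself a constant.
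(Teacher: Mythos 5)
Your proof is correct and follows essentially the same route as the paper: extract $\ket{\psi_x}=\bigoplus_a\ket{\sigma_{x,a}}$ and block projectors $\Pi_a$ from an optimal solution to the $\ADV_{rel}^\pm(f)$ program, observe that the equality constraint makes $\{u_{x,i}\},\{v_{x,i}\}$ feasible for $\gamma_2(1-[\braket{\psi_x}{\psi_y}]\mid D)$ with the same objective, invoke \cref{thm:lmrss} with a fixed constant $\epsilon$, and measure with $\{\Pi_a\otimes I\}$. The only point of divergence is the final error-bound step: you pass from fidelity to trace distance and appeal to contractivity of measurement, giving error at most $\sqrt{\epsilon}$ and hence take $\epsilon=1/9$, whereas the paper works directly with $\|\ket{\psi_x'}-\ket{\psi_x}\ket{0}\|^2\le\epsilon$ and the orthogonal decomposition $\sum_a\|(\Pi_a\otimes I)\ket{\psi_x'}-\ket{\sigma_{x,a}}\ket{0}\|^2\ge\sum_{a:(x,a)\notin f}\|(\Pi_a\otimes I)\ket{\psi_x'}\|^2$; both yield $O(\ADV_{rel}^\pm(f))$ queries and the difference is purely cosmetic. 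Your explicit note on why the semi-tightness of \cref{thm:lmrss} costs only a constant, and the aside handling the degenerate case $\gamma_2(A\mid D)\le\epsilon$, are minor elaborations the paper leaves implicit.
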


\begin{proof}
Let $\{\ket{\sigma_{x,a}} \}_{x \in \{0,1\}^n, a \in [K]}$ be part of an optimal solution to the program for $\ADV_{rel}^\pm(f)$.  Let $m_a$ be the dimension of $\ket{\sigma_{x,a}}$ and define 
$\ket{\psi_x} = \ket{\sigma_{x,1}} \oplus \cdots \oplus \ket{\sigma_{x,K}} \in \R^M$ where $M= \sum_{a \in [K]} m_a$.  For $a \in [K]$, let $s_a = \sum_{i < a} m_i$ and define 
$\Pi_a = \sum_{i=s_a+1}^{s_a + m_a} \ketbra{e_i}{e_i}$, where $\ket{e_i}$ is the $i^{th}$ standard basis vector.  For $A(x,y) = 1 - \braket{\psi_x}{\psi_y}$ we have 
$\gamma_2( A | D) = \ADV_{rel}^\pm(f)$.  Therefore by applying \cref{thm:lmrss} with $\epsilon = 1/3$ there is a quantum algorithm that on input $x$ 
terminates in a state $\ket{\psi_x'}$ satisfying $\langle  \psi_x' | (|\psi_x \rangle \otimes | 0 \rangle) \ge \sqrt{1-\epsilon}$ after making 
$O(\ADV_{rel}^\pm(f))$ many queries.  This implies 
$\| \ket{\psi_x'} - \ket{\psi_x}\ket{0} \|^2 \le \epsilon$ and so 
\begin{align*}
\epsilon &\ge \| \ket{\psi_x'} - \ket{\psi_x}\ket{0} \|^2 \\
&= \sum_{a \in [K]} \| (\Pi_a \otimes I) \ket{\psi_x'} - \ket{\sigma_{x,a}} \otimes \ket{0} \|^2 \\
&\ge \sum_{a: (x,a) \not \in f} \| (\Pi_a \otimes I) \ket{\psi_x'} \|^2
\end{align*}
Thus running the state generation algorithm for the states $\{\ket{\psi_x}\}$ and measuring according to the projectors $\{\Pi_a \otimes I\}_{a \in [K]}$ gives an algorithm with 
error probability at most $\epsilon = 1/3$ as desired.
\end{proof}

For the composition theorem we will also need the dual formulation of the relational adversary bound from \cref{def:adversary}.
\begin{theorem}[\cite{Bel15} Equation 22]
\label{thm:rel_adv_dual}
Let $f \subseteq \01^n \times [K]$ be a relation.  
For $i \in [n]$, let $D_i$ be a $2^n$-by-$2^n$ Boolean matrix 
where $D_i(x,y) = 1$ iff $x_i \ne y_i$.  For $a \in [K]$ let
\begin{equation}
\label{eqn:chia}
\chi_a(x) = 
\begin{cases}
1 & \mbox{ if } (x,a) \in f \\
0 & \mbox{ otherwise.}
\end{cases}
\end{equation}
Then
\begin{equation*}
\begin{aligned}
\ADV_{rel}^\pm(f) = & \underset{\Gamma}{\text{maximize}}
& & \lambda_{max}(\Gamma) \\
& \text{subject to}
& & \|\Gamma \circ D_i \| \leq 1 \mbox{ for all } i = 1, \ldots, n \\
& & & \Gamma \circ \chi_a \chi_a^T \preceq 0 \mbox{ for all } a \in [K] \enspace .
\end{aligned}
\end{equation*}
\end{theorem}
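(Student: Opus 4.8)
The statement is an instance of semidefinite programming duality: \cref{def:adv_rel} is a minimization SDP and the program in \cref{thm:rel_adv_dual} is, after a diagonal change of variables, its dual. The plan is (i) to prove weak duality directly — every $\Gamma$ feasible for the maximization satisfies $\lambda_{max}(\Gamma)\le\ADV_{rel}^\pm(f)$, which is also the one inequality the downstream application needs; (ii) to write \cref{def:adv_rel} explicitly as an SDP over Gram matrices and take its Lagrangian dual; (iii) to transform that dual, by a symmetrization and a diagonal scaling, into exactly the program of \cref{thm:rel_adv_dual}; and (iv) to conclude equality of the optimal values via strong duality after checking a constraint qualification.

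For weak duality, fix $\Gamma$ feasible for the maximization and $(\{u_{x,i}\},\{v_{x,i}\},\{\sigma_{x,a}\})$ feasible for \cref{def:adv_rel}, and let $\delta$ be a unit top eigenvector of $\Gamma$, so $\delta^T\Gamma\delta=\lambda_{max}(\Gamma)$. Let $\Psi_a$ be the Gram matrix $\Psi_a(x,y)=\langle\sigma_{x,a},\sigma_{y,a}\rangle$; it is positive semidefinite and, since $\|\sigma_{x,a}\|^2=0$ whenever $(x,a)\notin f$, it is supported on $\{x:(x,a)\in f\}$, so $\Gamma\circ\Psi_a=(\Gamma\circ\chi_a\chi_a^T)\circ\Psi_a\preceq 0$ by \cref{fact:simple}. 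Hence $\Gamma\circ\big(J-\sum_a\Psi_a\big)=\Gamma-\sum_a\Gamma\circ\Psi_a\succeq\Gamma$ (with $J$ the all-ones matrix), so by \cref{fact:lmax} $\delta^T\big(\Gamma\circ(J-\sum_a\Psi_a)\big)\delta\ge\lambda_{max}(\Gamma)$. On the other hand, the main constraint of \cref{def:adv_rel} reads $J-\sum_a\Psi_a=\sum_i D_i\circ B_i$ with $B_i(x,y)=\langle u_{x,i},v_{y,i}\rangle$, so this same quantity equals $\sum_i\delta^T\big((\Gamma\circ D_i)\circ B_i\big)\delta$; expanding $B_i$ over the hidden coordinate and applying $\|\Gamma\circ D_i\|\le 1$ together with Cauchy--Schwarz twice (first over the hidden coordinate, then over $i$) bounds it by $\sqrt{\max_x\sum_i\|u_{x,i}\|^2}\cdot\sqrt{\max_x\sum_i\|v_{x,i}\|^2}$, which is at most the objective of \cref{def:adv_rel}. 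Combining the two bounds gives $\lambda_{max}(\Gamma)\le\ADV_{rel}^\pm(f)$.

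For the reverse inequality I would write \cref{def:adv_rel} as an SDP in the Gram matrices of the $\{u_{x,i},v_{x,i}\}$ and the $\{\sigma_{x,a}\}$ — the latter indexed only by $\{x:(x,a)\in f\}$, the remaining entries being forced to vanish; note the diagonal of the main constraint forces $\sum_a\|\sigma_{x,a}\|^2=1$, so normalization of the target states is automatic and need not be imposed. Introducing multipliers $\alpha_x,\beta_x\ge 0$ for the two norm constraints, a symmetric $\Gamma$ for the main equality, and matrix multipliers for the positive-semidefiniteness constraints, the Lagrangian dual simplifies to: maximize $\mathbf 1^T\Gamma\,\mathbf 1$ over symmetric $\Gamma$ and $\alpha,\beta\ge 0$ subject to $\sum_x\alpha_x+\sum_x\beta_x=1$; the block matrix with diagonal blocks $\mathrm{diag}(\alpha),\mathrm{diag}(\beta)$ and off-diagonal blocks $-\tfrac12\Gamma\circ D_i$ is positive semidefinite for each $i$; and the principal submatrix of $\Gamma$ on $\{x:(x,a)\in f\}$ is negative semidefinite for each $a$ — this last condition is exactly $\Gamma\circ\chi_a\chi_a^T\preceq 0$. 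Since $\Gamma\circ D_i$ is symmetric, conjugating the block matrix by the permutation that swaps its two blocks and averaging shows we may take $\alpha=\beta$, hence $\sum_x\alpha_x=\tfrac12$; then substituting $\Gamma'=\tfrac12\,\mathrm{diag}(\alpha)^{-1/2}\,\Gamma\,\mathrm{diag}(\alpha)^{-1/2}$ turns the block condition into $\|\Gamma'\circ D_i\|\le 1$, preserves $\Gamma'\circ\chi_a\chi_a^T\preceq 0$, and turns $\mathbf 1^T\Gamma\,\mathbf 1$ into $w^T\Gamma' w$ for the unit vector $w=\sqrt 2\,\mathrm{diag}(\alpha)^{1/2}\mathbf 1$, hence at most $\lambda_{max}(\Gamma')$; conversely, starting from $\Gamma'$ feasible for \cref{thm:rel_adv_dual} with top unit eigenvector $w$, the choice $\alpha=\beta=\tfrac12\,w\circ w$ and $\Gamma=\mathrm{diag}(w)\,\Gamma'\,\mathrm{diag}(w)$ is a dual-feasible point with value $\lambda_{max}(\Gamma')$. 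Thus the Lagrangian dual has exactly the value of the program in \cref{thm:rel_adv_dual}. Strong duality applies because \cref{def:adv_rel} is feasible with finite value (e.g.\ via a solution reading all $n$ input bits), and the dual is strictly feasible — take $\alpha,\beta$ uniform and $\Gamma=-\epsilon I$, which has $\Gamma\circ D_i=0$ and makes each relevant principal submatrix negative definite.

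The delicate step is the constraint qualification. The constraint $\Gamma\circ\chi_a\chi_a^T\preceq 0$ is never strictly feasible as literally written (that matrix vanishes off the support of $\chi_a$), so one must first rephrase it as a condition on the principal submatrix on $\{x:(x,a)\in f\}$ before exhibiting a Slater point, and then check that the two phrasings share the same optimum — they do, because replacing $\Gamma$ by $\Gamma-\epsilon I$ leaves every $\Gamma\circ D_i$ unchanged ($D_i$ has zero diagonal) and only shifts $\lambda_{max}$ by $\epsilon$. Similarly, the diagonal scaling $\Gamma\leftrightarrow\Gamma'$ needs $\alpha,\beta$ strictly positive, so one must handle the coordinates where the averaged multiplier vanishes (there the corresponding row of $\Gamma$ is forced to zero off the diagonal, its diagonal entry is nonpositive, and the index can simply be dropped). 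These points are routine but should be written out carefully.
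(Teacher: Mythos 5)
The paper does not prove \cref{thm:rel_adv_dual} at all — it is stated as a citation to Belovs~\cite{Bel15}, Equation~22, where it is derived by the same route you take: \cref{def:adv_rel} is the vector (primal) form and the $\lambda_{max}$ program is its SDP dual. So there is no in-paper argument to compare against; your proposal is essentially the canonical derivation. Your weak-duality paragraph is a clean, self-contained argument and is correct: the identity $\Gamma\circ\Psi_a=(\Gamma\circ\chi_a\chi_a^T)\circ\Psi_a$ (using that $\Psi_a$ vanishes off the support of $\chi_a$) combined with \cref{fact:simple} gives $\Gamma\circ\Psi_a\preceq 0$, and then the usual spectral-norm-plus-Cauchy--Schwarz estimate on $\sum_i\delta^T\big((\Gamma\circ D_i)\circ B_i\big)\delta$ finishes it. One small note: the step $\delta^T\big(\Gamma\circ(J-\sum_a\Psi_a)\big)\delta\ge\lambda_{max}(\Gamma)$ follows directly from $\delta^T\Gamma\delta=\lambda_{max}(\Gamma)$ and $\delta^T(\Gamma\circ\Psi_a)\delta\le 0$; invoking \cref{fact:lmax} there is unnecessary. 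On the strong-duality side, the Lagrangian form (blocked by $i$ since the constraint touches only the diagonal-in-$i$ inner products), the symmetrization $\alpha=\beta$, and the diagonal rescaling to $\lambda_{max}$ form are all right, and the edge cases you flag (zero coordinates of $\alpha$, Slater for a constraint that is never strictly feasible as written) are exactly the ones that need writing out; your sketched resolutions are the correct ones. In short: correct, and the same approach as the cited source.
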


We will call a matrix $\Gamma$ satisfying $\Gamma \circ \chi_a \chi_a^T \preceq 0$ for all $a \in [K]$ a relational 
adversary matrix.

\begin{definition}[Efficiently verifiable]
\label{def:verifiable}
Let $K$ be a positive integer and $f \subseteq \{0,1\}^n \times [K]$ be a relation.  For each $a \in [K]$ define 
a Boolean function $f_a$ by $f_a(x) = 1$ iff $(x,a) \in f$.  We say that $f$ is \emph{efficiently verifiable} iff
$\ADV^\pm(f_a) = o(\ADV_{rel}^\pm(f))$ for all $a \in [K]$.  
\end{definition}

Belovs \cite{Bel15} has shown that the relational adversary bound is also a lower bound on the bounded-error quantum query complexity of 
a relation $f$ that is efficiently verifiable.

\begin{theorem}[Belovs \cite{Bel15} Theorem 40]
\label{thm:belovs}
Let $f \subseteq \01^n \times [K]$ be an efficiently verifiable relation.  Then $Q(f) = \Omega(\ADV_{rel}^\pm(f))$.  
\end{theorem}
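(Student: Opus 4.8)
The plan is to run the argument behind \cref{thm:alg} in reverse: starting from any bounded-error algorithm for $f$, build a near-exact state generation algorithm whose cost is controlled by the (tight) lower-bound direction of the state generation characterization of \cite{LMRSS11}, and whose target vectors form a family of perfect target states for $f$, so that the optimization in \cref{def:adv_rel} can be invoked. Efficient verifiability is precisely what makes the ``near-exact'' step affordable.

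First I would fix an optimal bounded-error algorithm $\mathcal{A}$ for $f$ using $T = Q(f)$ queries, and for each $a \in [K]$ an optimal algorithm $V_a$ for the Boolean function $f_a$; by \cref{thm:adv_char} and efficient verifiability, $V_a$ uses $Q(f_a) = \Theta(\ADV^\pm(f_a)) = o(\ADV_{rel}^\pm(f))$ queries, uniformly in $a$. The heart of the proof is an amplified algorithm $\mathcal{A}'$ that on input $x$ coherently runs $\mathcal{A}$ to produce a candidate answer $a$, coherently runs $V_a$ --- boosted to a small \emph{constant} error by $O(1)$ repetitions --- to test whether $(x,a)\in f$, and on the branches where the test rejects recurses into fresh workspace and repeats, for $O(1)$ rounds. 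Since the test is reliable up to a small constant and each round shrinks by a constant factor the weight still flagged ``reject'', the final state $\ket{\Psi_x}$ carries only an $\epsilon$ fraction of its weight --- $\epsilon$ a small absolute constant --- on answers $a$ with $(x,a)\notin f$. Taking $\Pi_a$ to be the projector onto ``candidate register equals $a$'' (a fixed complete orthogonal family, independent of $x$), the vectors $\Pi_a\ket{\Psi_x}$ for $(x,a)\in f$ renormalize to a genuine family of perfect target states $\{\ket{\psi_x}\}$ for $f$ (with projectors $\{\Pi_a\}$), and $\ket{\Psi_x}$ is $\epsilon$-close to $\ket{\psi_x}$. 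The cost of $\mathcal{A}'$ is $O(T + \max_a Q(f_a)) = O(T) + o(\ADV_{rel}^\pm(f))$.

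Next I would apply the lower-bound direction of the \cite{LMRSS11} state generation characterization: an algorithm that on every input $x$ terminates $\epsilon$-close to $\ket{\psi_x}$ must make $\Omega\big(\gamma_2(1-[\braket{\psi_x}{\psi_y}]_{x,y}\mid D)\big)$ queries, the hidden constant depending only on $\epsilon$ and bounded away from $0$ for $\epsilon$ a small enough constant --- the state-generation analogue of the robustness of $\ADV^\pm$ under constant error. Since $\{\ket{\psi_x}\}$ is a family of perfect target states for $f$, the program of \cref{def:adv_rel} gives $\gamma_2(1-[\braket{\psi_x}{\psi_y}]_{x,y}\mid D) \ge \ADV_{rel}^\pm(f)$. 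Chaining the bounds, $O(T) + o(\ADV_{rel}^\pm(f)) = \Omega(\ADV_{rel}^\pm(f))$, which rearranges to $T = \Omega(\ADV_{rel}^\pm(f))$.

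I expect the amplification step to be the main obstacle: the output of $\mathcal{A}'$ must be genuinely close to a \emph{perfect} target state --- in particular, the weight on pairs $(x,a)\notin f$ must be truly suppressed, not merely reduced --- at a cost of only a multiplicative constant together with an additive $o(\ADV_{rel}^\pm(f))$ term. This is exactly where the $o(\cdot)$ in the definition of efficiently verifiable is needed: it lets each $V_a$ be run and $O(1)$-amplified ``for free'', whereas pushing the verification error all the way to $0$ would cost a logarithmic factor per verifier and require the stronger hypothesis that $\ADV^\pm(f_a)$ times a polylogarithmic factor is still $o(\ADV_{rel}^\pm(f))$. A secondary check is that the constant-error form of the \cite{LMRSS11} lower bound loses only a multiplicative constant, not an additive term growing with the dimension. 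Finally, it is worth noting why a direct adversary-matrix argument is less convenient: tracking the usual progress measure $\sum_{x,y}\delta_x\delta_y\Gamma(x,y)\braket{\psi_x^t}{\psi_y^t}$ against an optimal relational adversary matrix $\Gamma$ from \cref{thm:rel_adv_dual}, the part of the final state left over after using the constraints $\Gamma\circ\chi_a\chi_a^T\preceq 0$ is governed by $\|\Gamma\|$ rather than $\lambda_{max}(\Gamma)$, and these can differ by an unbounded factor because a relational adversary matrix need not be bipartite; working through $\gamma_2$ sidesteps this.
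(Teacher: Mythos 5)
The paper does not prove this theorem --- it is stated with a citation to Theorem~40 of Belovs~\cite{Bel15} and used as a black box --- so there is no ``paper's own proof'' to compare against. Your proposal reconstructs the argument of that cited proof correctly in outline: amplify the bounded-error algorithm for $f$ using the verifiers for the slice functions $f_a$ (this is exactly what efficient verifiability buys, as you note), view the amplified procedure as a near-exact state-generation algorithm whose targets are a family of perfect target states for $f$, invoke the lower-bound direction of the \cite{LMRSS11} state-conversion characterization (at small constant error the loss is only a constant factor), and observe that perfect target states supply feasible $\{\sigma_{x,a}\}$ for the minimization in \cref{def:adv_rel}, so the filtered $\gamma_2$ norm of the associated Gram-difference matrix is at least $\ADV_{rel}^\pm(f)$. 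The algebraic close $O(Q(f)) + o(\ADV_{rel}^\pm(f)) = \Omega(\ADV_{rel}^\pm(f))$ then gives $Q(f) = \Omega(\ADV_{rel}^\pm(f))$. You have also correctly identified both the delicate step --- making the coherent repeat-and-verify amplification precise, i.e.\ controlling the boosted verifier on the candidate register and bounding, round by round, the amplitude remaining on rejected or falsely accepted branches --- and the reason a direct progress-measure argument against an optimal $\Gamma$ from \cref{thm:rel_adv_dual} is inconvenient: a relational adversary matrix need not be bipartite, so $\lambda_{max}(\Gamma)$ and $\|\Gamma\|$ can differ by an unbounded factor, and the $\gamma_2$ route avoids that mismatch. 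One caveat worth recording: the lower-bound direction of the LMRSS characterization is not restated in this paper --- only the algorithmic direction, \cref{thm:lmrss}, appears --- so your argument leans on a theorem from the literature that the paper itself accesses only implicitly through the citation to \cite{Bel15}. That is not a gap in the mathematics, but a detailed write-up would need to state the constant-error state-conversion lower bound explicitly before invoking it.
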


\section{Composition Theorem}

\begin{definition}
Let $A$ be an $m$-by-$n$ matrix.  Define 
\begin{equation}
\label{eqn:hatA}
\hat A = 
\begin{bmatrix}
\|A\| I_m & A \\
A^T & \|A\| I_n
\end{bmatrix}
\enspace .
\end{equation}
\end{definition}

\begin{lemma}
\label{lem:hat_psd}
$\hat A$ is positive semidefinite for any $A$.  
\end{lemma}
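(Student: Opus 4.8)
The plan is to verify positive semidefiniteness straight from the definition, i.e.\ to show $w^{T}\hat A\, w \ge 0$ for every vector $w$. Writing $w = \begin{bmatrix} u \\ v \end{bmatrix}$ with $u \in \R^{m}$ and $v \in \R^{n}$, the block structure of $\hat A$ gives
\[
w^{T}\hat A\, w = \|A\|\,\|u\|^{2} + 2\,u^{T}A v + \|A\|\,\|v\|^{2}\enspace .
\]

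Next I would control the cross term using the variational characterization of the spectral norm, namely $|u^{T}A v| \le \|A\|\,\|u\|\,\|v\|$. Hence $2\,u^{T}A v \ge -2\|A\|\,\|u\|\,\|v\|$, and substituting back,
\[
w^{T}\hat A\, w \ge \|A\|\bigl(\|u\|^{2} - 2\|u\|\,\|v\| + \|v\|^{2}\bigr) = \|A\|\,\bigl(\|u\| - \|v\|\bigr)^{2} \ge 0\enspace ,
\]
which is exactly what we need. There is really no obstacle here; the only point to keep in mind is the degenerate case $\|A\| = 0$, in which $A$ is the zero matrix and $\hat A$ is the zero matrix, trivially positive semidefinite — but the inequality above already covers this case uniformly, so no separate argument is needed.

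For completeness I would remark on an alternative via the Schur complement: when $\|A\| > 0$, the top-left block $\|A\| I_{m}$ is positive definite, so $\hat A \succeq 0$ iff the Schur complement $\|A\| I_{n} - \tfrac{1}{\|A\|}A^{T}A$ is positive semidefinite, i.e.\ iff $\|A\|^{2} I_{n} \succeq A^{T}A$. This is immediate from \cref{fact:lmax} together with the identity $\lambda_{max}(A^{T}A) = \|A\|^{2}$. I would nonetheless present the direct quadratic-form computation as the main proof, since it is a one-line estimate and avoids any case distinction on whether $\|A\|$ vanishes.
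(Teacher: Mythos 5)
Your proof is correct, but it takes a slightly different and more elementary route than the paper's. The paper writes $\hat A = \|A\|\,I + A'$ with $A' = \begin{bmatrix} 0 & A \\ A^{T} & 0 \end{bmatrix}$, cites the fact that the minimum eigenvalue of $A'$ equals $-\|A\|$, and then applies \cref{fact:lmax}; it is essentially a one-line spectral shift argument. You instead evaluate the quadratic form $w^{T}\hat A\,w$ directly, bound the cross term $2u^{T}Av$ by $-2\|A\|\,\|u\|\,\|v\|$ via Cauchy--Schwarz and the operator-norm inequality, and complete the square to obtain $\|A\|(\|u\|-\|v\|)^{2} \ge 0$. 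The two arguments are closely related --- your cross-term bound is exactly what underlies the statement $\lambda_{\min}(A') = -\|A\|$ --- but yours is self-contained and avoids invoking the spectrum of the symmetrized block matrix as a black box, whereas the paper's version is more compact once that spectral fact is taken as known. Your aside about the Schur complement $\|A\|I_{n} - \tfrac{1}{\|A\|}A^{T}A \succeq 0$ (when $\|A\|>0$) is also correct and gives a third valid route. No gaps.
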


\begin{proof}
The minimum eigenvalue of $A' = \begin{bmatrix} 0 & A \\ A^T & 0 \end{bmatrix}$ is $-\|A\|$ thus 
$\hat A = \|A\| I + A' \succeq 0$ by \cref{fact:lmax}.
\end{proof}

\begin{definition}[Matrix Composition]
\label{def:comp}
Let $N$ be a positive integer and let $B$ be a symmetric $2^N$-by-$2^N$ matrix.  Let $A_1, \ldots, A_N$ be matrices where $A_i$ is of size $m_i$-by-$n_i$.  
Define the \emph{matrix composition} of $B$ with $A_1, \ldots, A_N$ to be a matrix $C$ of size $\prod_{i=1}^N (m_i + n_i)$ 
with rows and columns labeled by elements of $[m_1 + n_1] \times \cdots \times [m_N + n_N]$.  For $a = (a_1, \ldots, a_N) \in [m_1 + n_1] \times \cdots \times [m_N + n_N]$ define 
$\tilde a \in \{0,1\}^N$ to be the string where $\tilde a_i = 1$ if $a_i > m_i$ and $\tilde a_i = 0$ otherwise.  Let $\tilde B$ be the matrix of the same size as $C$ where 
$\tilde B(a,b) = B(\tilde a , \tilde b)$.  Then 
\[
C = \tilde B \circ (\otimes_{i=1}^N \hat A_i) \enspace .
\]
\end{definition}

\begin{remark}
Each $A_i$ implicitly defines a function $g_i : [m_i + n_i] \rightarrow \{0,1\}$ where $g_i(a) = 1$ iff $a > m_i$.  Under this interpretation, the definition of 
$\tilde a$ arises naturally as $\tilde a = (g_1(a_1), \ldots, g_N(a_N))$ 
for $a = (a_1, \ldots, a_N) \in [m_1 + n_1] \times \cdots \times [m_N + n_N]$. 
\end{remark}

Let $C$ be the matrix composition of $B$ with $A_1, \ldots, A_N$.
The key to adversary composition theorems is the fact that $\|C\| = \|B\| \cdot \prod_{i=1}^N \|A_i\|$, as originally shown in \cite{HLS07} Lemma 16.  
The statement we give here differs in two respects from the 
statement given in \cite{HLS07}.  First, in \cite{HLS07} the lemma was only stated for $B$ restricted to be of the form 
\[
B = 
\begin{bmatrix}
0 & Z \\
Z^T & 0
\end{bmatrix}
\enspace ,
\]
for some matrix $Z$.  To show the relational composition theorem we need to allow an arbitrary symmetric matrix $B$.  Second, as the objective function of the relational adversary 
bound in \cref{thm:rel_adv_dual} is in terms of $\lambda_{max}$ we need a statement both about the spectral norm of $C$ and about $\lambda_{max}(C)$. The proof 
given in \cite{HLS07} can handle both of these generalizations; the assumption about the structure of $B$ is not used and the proof already determines the entire spectrum of $C$.  
Instead, however, we give a new proof which is substantially shorter than the one from \cite{HLS07}.
\begin{lemma}
\label{lem:main}
Let $N$ be a positive integer.  Let $B \in \R^{2^N \times 2^N}$ be a symmetric matrix and $A_1, \ldots, A_N$ be arbitrary matrices.  If $C$ is the 
matrix composition of $B$ with $A_1, \ldots, A_N$ then 
\[
\|C\| = \|B\| \cdot \prod_{i=1}^N \|A_i\|  \qquad \mbox{ and } \qquad
\lambda_{max}(C) \ge \lambda_{max}(B) \cdot \prod_{i=1}^N \|A_i\| \enspace .
\]
\end{lemma}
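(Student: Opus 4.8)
The plan is to build explicit test vectors achieving the desired lower bounds and then a norm-subadditivity argument for the matching upper bound on $\|C\|$. For the lower bounds, let $w \in \R^{2^N}$ be a unit eigenvector of $B$. For $\|C\|$ we take $w$ with $Bw = \pm\|B\| w$, and for $\lambda_{max}(C)$ we take $w$ with $Bw = \lambda_{max}(B) w$. For each $i$, let $u_i$ (resp. $v_i$) be top left (resp. right) singular vectors of $A_i$, so $A_i v_i = \|A_i\| u_i$ and $A_i^T u_i = \|A_i\| v_i$. The natural candidate vector on $[m_i+n_i]$ is $p_i = \tfrac{1}{\sqrt2}(u_i \oplus v_i)$ (a unit vector), which satisfies $\hat A_i p_i = 2\|A_i\| p_i$ by a direct $2\times2$ block computation, hence $p_i$ is a top eigenvector of $\hat A_i$ with eigenvalue $2\|A_i\|$. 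The global test vector will be an appropriate combination of tensor products $\bigotimes_i q_i^{(a)}$ where each $q_i^{(a)} \in \{u_i \oplus 0, 0 \oplus v_i\}$ depending on a bit $a_i$, with coefficients $w(a)/2^{N/2}$; concretely set $V = \sum_{a \in \{0,1\}^N} \frac{w(a)}{2^{N/2}} \bigotimes_{i=1}^N r_i(a_i)$ where $r_i(0) = u_i \oplus 0$ and $r_i(1) = 0 \oplus v_i$. Since the $r_i(b_i)$ across distinct $b \in \{0,1\}^N$ are mutually orthogonal and each has norm $1$, $V$ is a unit vector.

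The key computation is then $V^T C V$. Because $C = \tilde B \circ (\otimes_i \hat A_i)$ and $\tilde B(a,b)$ depends only on $(\tilde a, \tilde b)$, and because $r_i(b_i)$ is supported exactly on the block selected by $\tilde b_i = b_i$, one checks that $V^T C V = \sum_{a,b} \frac{w(a)w(b)}{2^N} B(a,b) \prod_i \langle r_i(a_i), \hat A_i r_i(b_i)\rangle$. The inner factor $\langle r_i(a_i), \hat A_i r_i(b_i)\rangle$ equals $\|A_i\|$ when $a_i = b_i$ (from the diagonal $\|A_i\| I$ blocks, using $\|u_i\|=\|v_i\|=1$) and equals $\langle u_i, A_i v_i\rangle = \|A_i\|$ or $\langle v_i, A_i^T u_i\rangle = \|A_i\|$ when $a_i \ne b_i$ — in every case it is exactly $\|A_i\|$. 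Hence $V^T C V = \left(\prod_i \|A_i\|\right) \cdot \sum_{a,b} \frac{w(a)w(b)}{2^N} B(a,b)$... wait, the normalization needs care: the $r_i$ have norm $1$ while $p_i$ had the $\tfrac1{\sqrt2}$; recomputing, $V$ as written has norm $1$ since the $2^N$ summands are orthonormal after absorbing $w(a)/2^{N/2}$, and actually the product of inner factors is $\prod_i \|A_i\|$ with no stray powers of $2$ once one tracks that there are $2^N$ orthonormal pieces and $\|w\|=1$. This gives $V^T C V = \left(\prod_i \|A_i\|\right) w^T B w$, which is $\pm\|B\|\prod_i\|A_i\|$ or $\lambda_{max}(B)\prod_i\|A_i\|$ as desired; taking absolute values yields $\|C\| \ge \|B\|\prod_i\|A_i\|$ and $\lambda_{max}(C) \ge \lambda_{max}(B)\prod_i\|A_i\|$ directly from \cref{fact:lmax}.

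For the matching upper bound $\|C\| \le \|B\|\prod_i\|A_i\|$, I would use the fact (from \cref{lem:hat_psd}) that $\|A_i\| I \pm A_i' \succeq 0$ where $A_i' = \left[\begin{smallmatrix} 0 & A_i \\ A_i^T & 0\end{smallmatrix}\right]$. Write $\tilde B = \|B\| J - (\|B\| J - \tilde B)$ where $J$ is all-ones of the size of $C$; both $\|B\| J \circ (\otimes \hat A_i)$-type terms and the structure suggest instead the cleaner route: decompose $B$ using its spectral decomposition $B = \sum_j \mu_j w_j w_j^T$ with $|\mu_j| \le \|B\|$. Then $\tilde B = \sum_j \mu_j \tilde{w_j}\tilde{w_j}^T$ where $\tilde{w_j}(a) = w_j(\tilde a)$, and $C = \sum_j \mu_j (\tilde{w_j}\tilde{w_j}^T) \circ (\otimes_i \hat A_i)$. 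For each rank-one term, $(\tilde{w_j}\tilde{w_j}^T)\circ(\otimes_i \hat A_i) = \mathrm{diag}(\tilde{w_j})(\otimes_i \hat A_i)\mathrm{diag}(\tilde{w_j})$, which has the same spectral norm as a submatrix/diagonal-conjugate of $\otimes_i \hat A_i$; since $\|\hat A_i\| = 2\|A_i\|$... here the factor-of-$2$ discrepancy signals that a more careful bookkeeping of the diagonal scaling (the $\tilde w_j$ entries restricted to active blocks) is needed, which is exactly where the argument must be done carefully rather than sloppily. The cleanest rigorous path, and the one I would actually write, is: show $\|B\|\left(\otimes_i \hat A_i\right) \pm C \succeq 0$ by exhibiting it as $(\|B\| J \pm \tilde B)\circ(\otimes_i \hat A_i)$, noting $\|B\| J \pm \tilde B \succeq 0$ since its $(a,b)$ entry depends only on $(\tilde a,\tilde b)$ and $\|B\| J_{2^N} \pm B \succeq 0$ by \cref{fact:lmax}, then invoking \cref{fact:duplication} and \cref{fact:simple} for the Hadamard product with the PSD matrix $\otimes_i \hat A_i$ (PSD by \cref{lem:hat_psd} and $A\otimes B \succeq 0$); finally $\|\otimes_i \hat A_i\| = \prod_i \|\hat A_i\|$ combined with a separate check that the relevant diagonal blocks contribute the correct scaling gives $\|C\| \le \|B\|\prod_i\|A_i\|$.

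The main obstacle I anticipate is reconciling the factor $\|\hat A_i\| = 2\|A_i\|$ with the claimed product $\prod_i\|A_i\|$: the resolution is that in both the lower-bound test vector and the upper-bound $\otimes_i \hat A_i$ bound, the vectors/matrices involved only ever "see" the off-diagonal $A_i$ blocks paired with unit vectors on each side, so the effective contribution per coordinate is $\|A_i\|$, not $2\|A_i\|$; pinning this down cleanly — probably by restricting attention to the subspace spanned by vectors of the form $\bigotimes_i r_i(b_i)$ and observing $C$ restricted there has entries exactly $B(a,b)\prod_i\|A_i\|$ for the extremal singular directions — is the crux, and once it is done both inequalities fall out immediately.
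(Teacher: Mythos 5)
Your lower-bound construction is essentially the paper's test vector. One fixable slip: since your $r_i(b_i)\in\{u_i\oplus 0,\ 0\oplus v_i\}$ already have norm $1$ (the paper's $z^{(i),b}_1$ have norm $1/\sqrt2$, which is why the paper carries $\sqrt2$ factors), the coefficient on $\bigotimes_i r_i(a_i)$ should simply be $w(a)$, not $w(a)/2^{N/2}$; as written $\|V\|^2=\sum_a w(a)^2/2^N = 2^{-N}$, so $V$ is not a unit vector. With $V=\sum_a w(a)\bigotimes_i r_i(a_i)$ the computation $V^T C V = w^T B w\cdot\prod_i\|A_i\|$ is correct and both lower bounds follow.

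The upper-bound argument, by contrast, has a genuine gap. Your ``cleanest rigorous path'' rests on the claim that $\|B\|\,J_{2^N}\pm B\succeq 0$ follows from \cref{fact:lmax}, but that fact gives $\|B\|\,I\pm B\succeq 0$, with the identity, not the all-ones matrix; indeed $\|B\|J_{2^N}-B$ fails to be psd already for $B=-I$. And even if one had $\|B\|(\otimes_i\hat A_i)\pm C\succeq 0$, that would only yield $\|C\|\le\|B\|\prod_i\|\hat A_i\|=2^N\|B\|\prod_i\|A_i\|$, which is off by the factor $2^N$ you flag but never eliminate. Both problems disappear simultaneously if you replace $J$ by the lift $\tilde I$ of the identity ($\tilde I(x,y)=[\tilde x=\tilde y]$): then $\|B\|\,I_{2^N}\pm B\succeq 0$ by \cref{fact:lmax}, $\|B\|\,\tilde I\pm\tilde B\succeq 0$ by \cref{fact:duplication}, and Hadamard-multiplying by $\otimes_i\hat A_i\succeq 0$ (\cref{lem:hat_psd} and \cref{fact:simple}) gives $\|B\|(\tilde I\circ\otimes_i\hat A_i)\pm C\succeq 0$. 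Crucially $\tilde I\circ(\otimes_i\hat A_i)=\prod_i\|A_i\|\cdot I$ because the indicator $[\tilde x_i=\tilde y_i]$ restricts $\hat A_i$ to its diagonal $\|A_i\|I$ blocks, so $\|C\|\le\|B\|\prod_i\|A_i\|$ with no stray $2^N$. That corrected argument is a legitimately different route from the paper, which instead diagonalizes $C$ in the tensor eigenbasis of the $\begin{bmatrix}0&A_i\\A_i^T&0\end{bmatrix}$ and bounds the diagonal entries via $\Tr(BD)\le\|B\|\,\|D\|_{tr}$; yours trades that spectral bookkeeping for a short psd-sandwich and is arguably slicker, but as submitted the $J$-versus-$\tilde I$ confusion leaves the upper bound unproved.
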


\begin{proof}
First we set up some notation which will be used for both the upper and lower bounds of the proof.
For $i=1, \ldots, N$ let $A_i$ be a $m_i$-by-$n_i$ matrix.  Let $\|A_i\| = \lambda^{(i)}_1 \ge \cdots \ge \lambda^{(i)}_{m_i + n_i}$ be 
the eigenvalues of the matrix $\begin{bmatrix} 0 & A_i \\ A_i^T & 0 \end{bmatrix}$ and let $z^{(i)}_1, \ldots, z^{(i)}_{m_i+n_i}$ be 
the corresponding eigenvectors.  Note that $z^{(i)}_1, \ldots, z^{(i)}_{m_i+n_i}$ form an orthonormal basis for $\R^{m_i + n_i}$.  
Finally, let $z^{(i),0}_j = z^{(i)}_j(1:m_i)$ and $z^{(i),1}_j = z^{(i)}_j(m_i+1:m_i + n_i)$, and for $b_1, b_2 \in \{0,1\}$ define $\hat A_i^{(b_1, b_2)}$ by 
\[
\begin{bmatrix}
\hat A_i^{(0,0)} & \hat A_i^{(0,1)} \\
\hat A_i^{(1,0)} & \hat A_i^{(1,1)}
\end{bmatrix}
=
\begin{bmatrix}
 \|A_i\| I_{m_i} & A_i \\
A_i^T & \|A_i\| I_{n_i}
\end{bmatrix}
\enspace .
\]

From the eigenvalue equation we see that $A_i z^{(i),1}_j = \lambda^{(i)}_j z^{(i),0}_j$ and $A_i^T z^{(i),0}_j = \lambda^{(i)}_j z^{(i),1}_j$.  This means 
\begin{equation}
\label{eq:half}
\lambda^{(i)}_j \|z^{(i),0}_j\|^2= (z^{(i),0}_j)^T A_i z^{(i),1}_j = (z^{(i),1}_j)^T A_i^T z^{(i),0}_j =  \lambda^{(i)}_j \|z^{(i),1}_j\|^2
\end{equation}
and so $\|z^{(i),0}_j\|^2 = \|z^{(i),1}_j\|^2 =1/2$ whenever $\lambda^{(i)}_j \ne 0$.  Thus for $b_1 \in \{0,1\}$ we can succinctly write 
\[
(z^{(i),b_1}_j)^T \hat A_i^{(b_1, 1-b_1)} z^{(i),1-b_1}_j = \lambda^{(i)}_j  \|z^{(i),0}_j\| \|z^{(i),1}_j\| \enspace .
\]

These observations lead to the crucial fact we need, for $b_1, b_2 \in \{0,1\}$
\begin{equation}
\label{eq:crucial}
z^{(i),b_1}_j \hat A_i^{(b_1,b_2)} z^{(i),b_2}_k= 
\begin{cases}
0 & \mbox{ if } j \ne k \\
\|A_i\| \|z^{(i),b_1}_j\|^2 & \mbox{ if } j =k , b_1 = b_2 \\
\lambda^{(i)}_j \|z^{(i),0}_j\| \|z^{(i),1}_j\|& \mbox{ if } j=k, b_1 \ne b_2
\end{cases}
\enspace .
\end{equation}

We now turn to showing the most difficult part of the lemma, that $\|C\| \le \|B\| \cdot \prod_{i=1}^N \|A_i\|$.
Let $\psi_{j_1, \ldots, j_N} = z^{(1)}_{j_1} \otimes \cdots \otimes z^{(N)}_{j_N}$ where each $j_i \in [m_i + n_i]$.  We compute
\begin{equation}
\label{eq:psi}
\psi_{j_1, \ldots, j_N}^T C \psi_{k_1, \ldots, k_N} = \sum_{\alpha, \beta \in \{0,1\}^N} B(\alpha, \beta) \prod_{i \in [N]} z^{(i),\alpha_i}_{j_i} \hat A^{(\alpha_i,\beta_i)} z^{(i),\beta_i}_{k_i} \enspace.
\end{equation}
Suppose that $j_i \ne k_i$ for some $i$.  Then by \cref{eq:crucial} $z^{(i),\alpha_i}_{j_i} \hat A^{(\alpha_i,\beta_i)} z^{(i),\beta_i}_{k_i} =0$ for all values of $\alpha_i, \beta_i$.  Thus 
$\psi_{j_1, \ldots, j_N}^T C \psi_{k_1, \ldots, k_N} = 0$ unless $j_i = k_i$ for all $i = 1, \ldots, N$.  

Let us now consider the value in this case.
For $i = 1, \ldots, N$ let 
\[
D_i = \begin{bmatrix} 
\|A_i\| & \lambda^{(i)}_{j_i}  \\ 
\lambda^{(i)}_{j_i}  & \|A_i\| 
\end{bmatrix}
\circ 
\left(
\begin{bmatrix} 
\|z^{(i),0}_j\| \\ 
\|z^{(i),1}_j\|
\end{bmatrix}
\begin{bmatrix} 
\|z^{(i),0}_j\| & \|z^{(i),1}_j\|
\end{bmatrix}
\right) \enspace .
\]
Each $D_i \succeq 0$ because it is the Hadamard product of two positive semidefinite matrices (the first matrix is diagonally dominant as $\|A_i\| \ge | \lambda^{(i)}_{j_i} |$ and 
so is psd). Thus $\|D_i\|_{tr} = \Tr(D_i) = \|A_i\|$.  Let $D = \otimes_{i=1}^N D_i$ and note that $\|D\|_{tr} = \prod_{i=1}^N \|A_i\|$.  Then from \cref{eq:crucial} and \cref{eq:psi} we see that 
\[
\psi_{j_1, \ldots, j_N} ^T C \psi_{j_1, \ldots, j_N} = \Tr(BD) \le \|B\| \|D\|_{tr} = \|B\| \prod_{i=1}^N \|A_i\| \enspace .
\]

Let $M = \prod_{i=1}^N (m_i + n_i)$ and let $\psi = \sum_{j_1, \ldots, j_N} \alpha_{j_1, \ldots, j_N} \psi_{j_1, \ldots, j_N} \in \R^M$ be an arbitrary unit vector.  Then 
\begin{align*}
\psi^T C \psi &= \sum_{j_1, \ldots, j_N,k_1, \ldots, k_N} \alpha_{j_1, \ldots, j_N} \alpha_{k_1, \ldots, k_N} (\psi_{j_1, \ldots, j_N}^T C \psi_{k_1, \ldots, k_N}) \\
&= \sum_{j_1, \ldots, j_N} \alpha_{j_1, \ldots, j_N}^2 (\psi_{j_1, \ldots, j_N}^T C \psi_{j_1, \ldots, j_N}) \\
&\le \|B\| \prod_{i=1}^N \|A_i\| \enspace .
\end{align*}
This shows $\|C\| \le \|B\| \cdot \prod_{i=1}^N \|A_i\|$.

We now turn to showing the lower bounds 
\begin{align}
\label{eq:sp_lower}
\|C\| &\ge \|B\| \cdot \prod_{i=1}^N \|A_i\| \\
\label{eq:lmax_lower}
\lambda_{max}(C) &\ge \lambda_{max}(B) \cdot \prod_{i=1}^N \|A_i\| \enspace .
\end{align}
Recall that $z^{(i)}_1$ is an eigenvector of $\begin{bmatrix} 0 & A_i \\ A_i^T & 0 \end{bmatrix}$ corresponding 
to eigenvalue $\|A_i\|$.  Also $\| z^{(i),0}_1\|^2 = \| z^{(i),1}_1\|^2 = 1/2$ by \cref{eq:half}.    Let $v$ be a unit norm eigenvector of $B$ corresponding to 
eigenvalue $\lambda$ (which later will either be set to $\lambda_{max}(B)$ or $\lambda_{min}(B)$).  For $x = (x_1, \ldots, x_N) \in \{0,1\}^{Nm}$ let $\tilde x = (g(x_1), \ldots, g(x_N))$.  We now define our witness $\psi$ which we will 
use to show \cref{eq:sp_lower,eq:lmax_lower} via \cref{fact:lmax}:
\[
\psi((x_1, \ldots, x_N)) = v(\tilde x) \prod_{i=1}^N \sqrt{2} z^{(i),\tilde x_i}_1(x_i) \enspace.
\]
We have 
\begin{align*}
\sum_{x_1, \ldots, x_N} \psi((x_1, \ldots, x_N))^2 &= \sum_{\alpha \in \{0,1\}^N} v(\alpha)^2 \prod_{i=1}^N \left( \sum_{x_i : \tilde x_i = \alpha_i} 2 z^{(i),\alpha_i}_1(x_i)^2 \right) \\
&=  \sum_{\alpha \in \{0,1\}^N} v(\alpha)^2 \prod_{i=1}^N  \left( 2 \|z^{(i),\alpha_i}_1\|^2 \right) \\
&= \sum_{\alpha \in \{0,1\}^N} v(\alpha)^2 \\
&= 1 \enspace .
\end{align*}
Thus $\psi$ is a unit vector. 
Now
\begin{align*}
\psi^T C \psi &= \sum_{\alpha, \beta \in \{0,1\}^N} B(a,b) v(\alpha) v(\beta) \prod_{i=1}^N 2 \cdot \left( z^{(i),\alpha_i}_{1} \hat A^{(\alpha_i,\beta_i)} z^{(i),\beta_i}_{1} \right) \\
& = \sum_{\alpha, \beta \in \{0,1\}^N} B(a,b) v(\alpha) v(\beta) \prod_{i=1}^N \|A_i\| \\
& = \lambda \prod_{i=1}^N \|A_i\| \enspace .
\end{align*}
Taking $\lambda = \lambda_{max}(B)$ shows \cref{eq:lmax_lower} by \cref{fact:lmax}.  If $\lambda_{max}(B) = \|B\|$ this also 
shows \cref{eq:sp_lower}. Otherwise, if $\|B\| = -\lambda_{min}(B)$ then taking $\lambda = \lambda_{min}(B)$ we have $\psi^T C \psi = - \|B\| \prod_{i=1}^N \|A_i\|$, 
implying \cref{eq:sp_lower} again by \cref{fact:lmax}.
\end{proof}

We now turn to showing the main result of this note, that $\ADV_{rel}^\pm(f \circ g^N) = \ADV_{rel}^\pm(f) \ADV^\pm(g)$ for a relation $f$ and Boolean function $g$.  
We start with the more difficult direction which is showing the lower bound.
\begin{theorem}
\label{thm:lower}
Let $f \subseteq \01^N \times [K]$ be a relation and $g: \01^m \rightarrow \01$ be a Boolean function.  Then $\ADV_{rel}^\pm(f \circ g^N) \ge \ADV_{rel}^\pm(f) \ADV^\pm(g)$.  
\end{theorem}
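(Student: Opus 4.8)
The plan is to work with the primal (maximization) formulations of both bounds --- \cref{thm:rel_adv_dual} for $\ADV_{rel}^\pm$ and \cref{def:adversary} for $\ADV^\pm$ --- and to build an explicit relational adversary matrix for $h := f \circ g^N$. Fix an optimal relational adversary matrix $\Gamma_f$ for $f$, so that $\lambda_{max}(\Gamma_f) = \ADV_{rel}^\pm(f)$, $\|\Gamma_f \circ D_i\| \le 1$ for all $i \in [N]$, and $\Gamma_f \circ \chi_a\chi_a^T \preceq 0$ for all $a \in [K]$; and fix an optimal functional adversary matrix $\Gamma_g$ for $g$, written in the standard block form with off-diagonal block $Z$, so that $\|Z\| = \|\Gamma_g\| = \ADV^\pm(g)$ and $\|\Gamma_g \circ E_j\| \le 1$ for all $j \in [m]$ (here $D_1, \dots, D_N$ are the bit-difference matrices on $f$-inputs, $E_1, \dots, E_m$ those on $g$-inputs, and $D_{(i,j)}$, $(i,j) \in [N] \times [m]$, those on $h$-inputs). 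Let $\Gamma_h$ be the matrix composition (\cref{def:comp}) of $B = \Gamma_f$ with $A_1 = \cdots = A_N = Z$; its rows and columns are indexed by $x = (x_1, \dots, x_N) \in \01^{mN}$, and, with $\tilde x = (g(x_1), \dots, g(x_N))$ and $\hat Z = \begin{bmatrix} \|Z\| I & Z \\ Z^T & \|Z\| I \end{bmatrix}$, one has $\Gamma_h(x,y) = \Gamma_f(\tilde x, \tilde y) \prod_{k=1}^N \hat Z(x_k, y_k)$. Assuming $\ADV^\pm(g) > 0$ (otherwise the claimed inequality is trivial), it suffices to show that $\Gamma_h / \|Z\|^{N-1}$ is feasible for the program of \cref{thm:rel_adv_dual} for $h$: then $\ADV_{rel}^\pm(h) \ge \lambda_{max}(\Gamma_h)/\|Z\|^{N-1}$, while \cref{lem:main} gives $\lambda_{max}(\Gamma_h) \ge \lambda_{max}(\Gamma_f)\|Z\|^N$, so $\ADV_{rel}^\pm(h) \ge \lambda_{max}(\Gamma_f)\|Z\| = \ADV_{rel}^\pm(f)\,\ADV^\pm(g)$.

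Two of the three feasibility conditions are immediate. The objective value is exactly the content of \cref{lem:main} just used. For the negative-semidefinite constraints, note that $(x,a) \in h$ iff $(\tilde x, a) \in f$, so the indicator $\chi_a$ for $h$ is the indicator $\chi_a$ for $f$ precomposed with $x \mapsto \tilde x$; consequently $\Gamma_h \circ \chi_a\chi_a^T$ is obtained from $\Gamma_f \circ \chi_a\chi_a^T$ by the row/column duplication of \cref{fact:duplication} followed by a Hadamard product with $\otimes_{k=1}^N \hat Z$. Duplication sends the negative-semidefinite matrix $\Gamma_f \circ \chi_a\chi_a^T$ to a negative-semidefinite one (apply \cref{fact:duplication} to its negation), $\otimes_{k=1}^N \hat Z \succeq 0$ by \cref{lem:hat_psd} (and positive semidefiniteness is preserved by tensor products), and \cref{fact:simple} then gives $\Gamma_h \circ \chi_a\chi_a^T \preceq 0$.

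The crux is the $Nm$ query constraints $\|\Gamma_h \circ D_{(i,j)}\| \le \|Z\|^{N-1}$. The key identity is $\hat Z(a,b)\,E_j(a,b) = (\Gamma_g \circ E_j)(a,b)$ for $a,b \in \01^m$: the diagonal blocks $\|Z\|I$ of $\hat Z$ are annihilated by $E_j$, so only the off-diagonal block $W_j := Z \circ E_j^{01}$ survives, and $\|W_j\| = \|\Gamma_g \circ E_j\| \le 1$. Using in addition that $\Gamma_g(a,b) = 0$, hence $(\Gamma_g \circ E_j)(a,b) = 0$, whenever $g(a) = g(b)$, a two-line case split on whether $g(x_i) = g(y_i)$ shows that for all $x,y$
\[
(\Gamma_h \circ D_{(i,j)})(x,y) = (\Gamma_f \circ D_i)(\tilde x, \tilde y) \cdot \hat W_j(x_i, y_i) \prod_{k \ne i} \hat Z(x_k, y_k),
\]
where $\hat W_j = \begin{bmatrix} \|W_j\| I & W_j \\ W_j^T & \|W_j\| I \end{bmatrix}$: when $g(x_i) \ne g(y_i)$ both sides reduce to $\Gamma_f(\tilde x,\tilde y)\,W_j(x_i,y_i)\prod_{k\ne i}\hat Z(x_k,y_k)$ with $D_i(\tilde x,\tilde y)=1$, and when $g(x_i)=g(y_i)$ both sides vanish (the left because $\Gamma_g\circ E_j$ does, the right because $D_i(\tilde x,\tilde y)=0$). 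Hence $\Gamma_h \circ D_{(i,j)}$ is precisely the matrix composition of $\Gamma_f \circ D_i$ with $A_k = Z$ for $k \ne i$ and $A_i = W_j$, so \cref{lem:main} gives $\|\Gamma_h \circ D_{(i,j)}\| = \|\Gamma_f \circ D_i\|\,\|W_j\|\,\|Z\|^{N-1} \le \|Z\|^{N-1}$.

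I expect the bookkeeping in this last step to be the main obstacle: one has to check carefully that swapping $\Gamma_g \circ E_j$ for $\hat W_j$ in the $i$-th tensor slot really leaves $\Gamma_h \circ D_{(i,j)}$ unchanged --- this is exactly where the zero-diagonal-block structure of a functional adversary matrix is indispensable --- and that the object that results matches \cref{def:comp} literally, so that \cref{lem:main} applies as a black box. All remaining steps are direct citations of the facts established above.
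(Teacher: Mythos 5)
Your proof is correct and follows essentially the same route as the paper: compose $\Gamma_f$ with $N$ copies of the off-diagonal block $Z$ of $\Gamma_g$, use \cref{lem:main} for the eigenvalue and query-norm bounds, and use \cref{fact:duplication}, \cref{lem:hat_psd}, and \cref{fact:simple} for the negative-semidefinite constraints. The only real difference is presentational: the paper first cites \cite{HLS07} for the identity $\Gamma_h \circ D_\ell = (\widetilde{\Gamma_f \circ D_p}) \circ \bigl((\otimes^{p-1}\hat\Gamma_g)\otimes(\hat\Gamma_g\circ D_q)\otimes(\otimes^{N-p}\hat\Gamma_g)\bigr)$ and then proves separately (\cref{clm:right_form}) that $\hat\Gamma_g\circ D_q$ can be swapped for $\widehat{\Gamma_g\circ D_q}$, whereas you derive the combined identity $(\Gamma_h\circ D_{(i,j)})(x,y)=(\Gamma_f\circ D_i)(\tilde x,\tilde y)\cdot\hat W_j(x_i,y_i)\prod_{k\neq i}\hat Z(x_k,y_k)$ directly in one case analysis on whether $g(x_i)=g(y_i)$; both arguments lean on exactly the same fact, that the diagonal blocks of $\hat Z$ vanish under $E_j$ and $\Gamma_f\circ D_i$ vanishes when $\tilde x_i=\tilde y_i$.
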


\begin{proof}
Let $\Gamma_f$ be a relational adversary matrix achieving the optimal bound for the relation $f$.
Let
\[
\Gamma_g = 
\begin{bmatrix}
0 & Z \\
Z^T & 0
\end{bmatrix}
\]
be an optimal functional adversary matrix for $g$.  Define $\Gamma_h$ 
as the matrix composition of $\Gamma_f$ with $N$ copies of $Z$.  

For $a \in [K]$ let $\chi_a \in \R^{2^N}$ be defined as in \cref{eqn:chia} and define $\phi_a \in \R^{2^{mN}}$ similarly but for the composed function:
\[
\phi_a((x_1, \ldots, x_N))
=
\begin{cases}
1,& \text{$((g(x_1), \ldots, g(x_N)),a) \in f$;}\\
0,& \text{otherwise.}
\end{cases}
\]

We will show three things 
\begin{enumerate}[(1)]
  \item $\lambda_{max}(\Gamma_h ) =\lambda_{max}(\Gamma_f) \cdot \|\Gamma_g\|^N$;
  \item $\Gamma_h \circ \phi_a \phi_a^T \preceq 0$ for all $a \in [K]$; and
  \item $\|\Gamma_h \circ D_\ell \| \le \|\Gamma_f \circ D_p\| \cdot \|\Gamma_g \circ D_q\| \cdot \|\Gamma_g\|^{N-1}$, for any $p \in [N], q \in [m]$ where $\ell=(p-1)m+q$ is the $q^{th}$ bit in the $p^{th}$ block.
\end{enumerate}
These three items give the theorem since item~(2) implies $\Gamma_h$ is a relational adversary matrix and 
\begin{align*}
\frac{\lambda_{max}(\Gamma_h)}{\| \Gamma_h \circ D_\ell \|} & \ge \frac{\lambda_{max}(\Gamma_f) \cdot \|\Gamma_g\|^N}{\| \Gamma_f \circ D_p\| \cdot \|\Gamma_g \circ D_q\| \cdot \|\Gamma_g\|^{N-1} } \\
&\ge \left( \frac{\lambda_{max}(\Gamma_f)}{\|\Gamma_f \circ D_p\|}\right) \left(\frac{\|\Gamma_g\|}{\|\Gamma_g \circ D_q\|} \right) \\
&\ge \ADV_{rel}^\pm(f) \ADV^{\pm}(g) \enspace .
\end{align*}

We now show the three items.
Item~(1) follows immediately from the second equation of \cref{lem:main} as $\Gamma_h$ is defined to be the matrix composition of $\Gamma_f$ with $N$ copies 
of $Z$, and $\|\Gamma_g\| = \| Z \|$.
\medskip

Item~(2) is the main novelty in the relational case.  From \cref{def:comp} we see that 
\[
\Gamma_h \circ \phi_a \phi_a^T = (\tilde \Gamma_f \circ \phi_a \phi_a^T) \circ (\otimes^N \hat \Gamma_g) \enspace ,
\]
where $\hat\Gamma_g$ is defined via \cref{eqn:hatA} and $\tilde \Gamma_f$ is a $2^{mN}$-by-$2^{mN}$ matrix defined as 
\[
\tilde \Gamma_f((x_1, \ldots, x_N), (y_1, \ldots, y_N)) = \Gamma_f( (g(x_1), \ldots, g(x_N)), (g(y_1), \ldots, g(y_N)) \enspace .
\]
By Lemma~\ref{lem:hat_psd}, $\hat \Gamma_g \succeq 0$, which gives us $\otimes^N \hat \Gamma_g \succeq 0$.
Also, $\Gamma_f\circ \chi_a \chi_a^T \preceq 0$ by the definition of the relational adversary matrix, thus by Fact~\ref{fact:duplication}, we get that $\tilde \Gamma_f \circ \phi_a \phi_a^T\preceq 0$.
Combining the last two observations with Fact~\ref{fact:simple}, we get Item~(2).
\medskip

Item~(3) follows as in \cite{HLS07}, but for completeness we give the details here.  Write $\ell \in [mN]$ as 
$\ell = (p-1)m + q$ with $p \in [N]$ and $q \in [m]$, i.e.\ $\ell$ refers to $x_p(q)$, the $q^{th}$ bit in the $p^{th}$ block. 
It is shown in \cite{HLS07} that 
\begin{equation}
\label{eq:last_bit}
\Gamma_h \circ D_\ell = (\widetilde{\Gamma_f \circ D_p}) \circ \left( (\otimes^{p-1}\; \hat{\Gamma}_g) \otimes (\hat{\Gamma}_g \circ D_q) \otimes (\otimes^{N-p}\; \hat{\Gamma}_g) \right) \enspace ,
\end{equation}
where
\[
(\widetilde{\Gamma_f \circ D_p}) ((x_1, \ldots, x_N), (y_1, \ldots, y_N)) = (\Gamma_f \circ D_p)( (g(x_1), \ldots, g(x_N)), (g(y_1), \ldots, g(y_N)) \enspace .
\]
The right hand side is not obviously in the proper form to apply \cref{lem:main} because \cref{lem:main} requires $\widehat{\Gamma_g \circ D_q}$ in place of the term $\hat{\Gamma}_g \circ D_q$ in \cref{eq:last_bit}.  
We can show, however, that the right hand side of \cref{eq:last_bit} does not change if we replace $\hat{\Gamma}_g \circ D_q$ with $\widehat{\Gamma_g \circ D_q}$.
\begin{claim}
\label{clm:right_form}
\[
(\widetilde{\Gamma_f \circ D_p}) \circ \left( (\otimes^{p-1}\; \hat{\Gamma}_g) \otimes (\hat{\Gamma}_g \circ D_q) \otimes (\otimes^{N-p}\; \hat{\Gamma}_g) \right) =
(\widetilde{\Gamma_f \circ D_p}) \circ \left( (\otimes^{p-1}\; \hat{\Gamma}_g) \otimes (\widehat{\Gamma_g \circ D_q}) \otimes (\otimes^{N-p}\; \hat{\Gamma}_g) \right)
\]
\end{claim}
\begin{proof}
$\hat{\Gamma}_g \circ D_q$ and $\widehat{\Gamma_g \circ D_q}$ only differ on the diagonal, therefore it suffices to show that for all entries $(x_1, \ldots, x_N), (y_1, \ldots, y_N)$ 
where $x_p = y_p$ the left and right hand sides agree.  When $x_p = y_p$ the left hand side will be $0$ because $\hat{\Gamma}_g \circ D_q$ is zero on the diagonal.  When $x_p = y_p$ 
the right hand side will also be zero because then $\tilde x_p = \tilde y_p$ and so $(\widetilde{\Gamma_f \circ D_p})(x,y) = 0$.  
\end{proof}
The right hand side of the expression in \cref{clm:right_form} is now in the right form to apply the first equation of \cref{lem:main}, and we can conclude
$\|\Gamma_h \circ D_\ell\| \le \|\Gamma_f \circ D_p\| \|\Gamma_g \circ D_q\| \|\Gamma_g\|^{N-1}$ as desired.
\end{proof}

Next we show the upper bound.
\begin{theorem}
\label{thm:upper}
Let $f \subseteq \01^N \times [K]$ be a relation and $g: \01^m \rightarrow \01$ be a Boolean function.  Then $\ADV_{rel}^\pm(f \circ g^N) \le \ADV_{rel}^\pm(f) \ADV^\pm(g)$.  
\end{theorem}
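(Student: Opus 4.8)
The plan is to work with the primal (minimization) formulations of both $\ADV_{rel}^\pm$ and $\ADV^\pm$ and build a feasible solution for the composed relation $h = f \circ g^N$ by ``plugging in'' an optimal dual-style solution for $g$ into an optimal solution for $f$. Concretely, I would start with an optimal solution $\{u^{f}_{x,i}\}, \{v^{f}_{x,i}\}, \{\sigma^{f}_{x,a}\}$ for the program in \cref{def:adv_rel} applied to $f \subseteq \{0,1\}^N \times [K]$, and an optimal solution $\{u^{g}_{z,j}\}, \{v^{g}_{z,j}\}$ for the program of \cref{thm:adv_dual} applied to $g : \{0,1\}^m \to \{0,1\}$ (crucially using the \emph{extra} constraint that the inner product vanishes when $g(z) = g(z')$, which is exactly the reason that stronger dual was recorded in the preliminaries). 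For an input $X = (x_1, \ldots, x_N) \in \{0,1\}^{mN}$ write $\tilde X = (g(x_1), \ldots, g(x_N)) \in \{0,1\}^N$. The composed vectors should be taken as tensor products of the outer (relation) vectors with the inner (function) vectors along the block being queried, together with ``identity''-like factors $\ket{g(x_i)} \otimes$ (an appropriate unit vector) on the other blocks — this is the standard construction from \cite{Rei14, LMRSS11} adapted so that the $\sigma$-part simply carries over: set $\sigma^{h}_{X,a} := \sigma^{f}_{\tilde X, a}$, which automatically satisfies $\|\sigma^{h}_{X,a}\|^2 = 0$ whenever $(\tilde X, a) \notin f$, i.e.\ whenever $(X,a) \notin h$.

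The key computation is to verify the linear constraint
\[
\sum_{\ell : X_\ell \ne Y_\ell} \langle u^{h}_{X,\ell}, v^{h}_{Y,\ell} \rangle = 1 - \sum_{a \in [K]} \langle \sigma^{h}_{X,a}, \sigma^{h}_{Y,a} \rangle
\]
for all $X, Y$. Splitting $\ell = (p-1)m + q$ by block $p$ and bit $q$, the left side factors: the sum over bits $q$ in block $p$ with $x_p(q) \ne y_p(q)$ produces, via the inner solution's constraint, the value $0$ if $g(x_p) = g(y_p)$ and $1$ if $g(x_p) \ne g(y_p)$, and the other blocks contribute normalization factors equal to $1$ by construction. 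Summing over $p$ then reproduces exactly $\sum_{p : g(x_p) \ne g(y_p)} \langle u^{f}_{\tilde X, p}, v^{f}_{\tilde Y, p}\rangle$, which by feasibility of the $f$-solution equals $1 - \sum_a \langle \sigma^{f}_{\tilde X, a}, \sigma^{f}_{\tilde Y, a}\rangle = 1 - \sum_a \langle \sigma^{h}_{X,a}, \sigma^{h}_{Y,a}\rangle$, as required. Here it is essential that when $g(x_p) = g(y_p)$ the inner contribution is genuinely $0$ and not merely something that matches $g$'s constraint — this is where the strengthened dual of \cref{thm:adv_dual} is used, and it is the reason the same subtlety arose in the function-composition theorem.

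The remaining step is the objective bound: one shows $\sum_\ell \|u^{h}_{X,\ell}\|^2 \le (\sum_p \|u^{f}_{\tilde X, p}\|^2) \cdot (\max_z \sum_j \|u^{g}_{z,j}\|^2) \le \ADV_{rel}^\pm(f) \cdot \ADV^\pm(g)$, and symmetrically for the $v$'s; this follows because the tensor-product structure makes the norm-squared of each $u^{h}_{X,\ell}$ factor as $\|u^{f}_{\tilde X, p}\|^2$ times $\|u^{g}_{x_p, q}\|^2$ (the other blocks contributing unit factors), so the sum telescopes multiplicatively. I expect the main obstacle to be purely bookkeeping: getting the ``filler'' factors on the non-queried blocks exactly right so that (a) all normalizations come out to $1$, (b) the $\sigma$ inner products transfer cleanly, and (c) the objective does not blow up — essentially one must recheck that the Reichardt-style composition of dual feasible solutions goes through verbatim once the relational $\sigma$-terms are carried along unchanged, since those terms depend only on $\tilde X$ and hence are ``transparent'' to the inner gadget.
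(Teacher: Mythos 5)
Your overall strategy is the same as the paper's: take an optimal solution to the minimization program for $\ADV_{rel}^\pm(f)$ from \cref{def:adv_rel} and an optimal solution to the strengthened dual of $\ADV^\pm(g)$ from \cref{thm:adv_dual} (with the extra constraint that $\sum_{i : x_i \ne y_i}\langle u_{x,i}, v_{y,i}\rangle = 0$ when $g(x)=g(y)$), tensor the block-$p$ outer vector with the bit-$q$ inner vector, and carry the $\sigma$-vectors along unchanged via $\rho_{X,a} = \sigma_{\tilde X, a}$. The factorization of the constraint and the multiplicative bound on the objective are exactly the paper's computation.

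The one concrete error is the proposed ``filler'' factors $\ket{g(x_i)} \otimes (\text{unit vector})$ on the non-queried blocks. These are not part of the standard construction and would in fact destroy the constraint: if $\tilde X$ and $\tilde Y$ differ in two or more coordinates, then for every $\ell$ at least one block $i \ne p$ has $g(x_i) \ne g(y_i)$, so the factor $\langle \ket{g(x_i)}, \ket{g(y_i)}\rangle$ kills $\langle \alpha_{X,\ell}, \beta_{Y,\ell}\rangle$, forcing the left-hand side of the constraint to $0$; but the right-hand side $1 - \sum_a \langle \sigma_{\tilde X,a}, \sigma_{\tilde Y,a}\rangle$ is generally nonzero for such pairs. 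No filler is needed: the outer vector $\psi_{\tilde X, p}$ already depends on all of $\tilde X = (g(x_1), \ldots, g(x_N))$, so the correct composed vectors are simply $\alpha_{X,\ell} = \psi_{\tilde X, p} \otimes u_{x_p, q}$ and $\beta_{X,\ell} = \phi_{\tilde X, p} \otimes v_{x_p, q}$. With the filler removed, your outline is precisely the paper's proof.
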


\begin{proof}
Let $\{u_{y,i}\}_{y \in \{0,1\}^m, i \in [m]},\{u_{y,i}\}_{y \in \{0,1\}^m, i \in [m]}$ be an optimal solution to the $\ADV^\pm(g)$ program from \cref{thm:adv_dual}, and let 
$\{\psi_{z,i}\}_{z \in \{0,1\}^N, i \in [N]},\{\phi_{z,i}\}_{z \in \{0,1\}^N, i \in [N]}, \{\sigma_{z,a}\}_{z \in \{0,1\}^N, a \in [K]}$ be an optimal solution to the $\ADV_{rel}^\pm(f)$ program 
from \cref{def:adv_rel}.  We will construct a solution to the program from \cref{def:adv_rel} for $h=f \circ g^N$ of cost $\ADV_{rel}^\pm(f) \ADV^\pm(g)$.  For 
$x = (x_1, \ldots, x_N) \in \{0,1\}^{Nm}$ let $\tilde x = (g(x_1), \ldots, g(x_N))$.

Define 
\begin{itemize}
  \item $\alpha_{x,\ell} = \psi_{\tilde x, p} \otimes u_{x_p, q}$ for $\ell = (m-1)p + q$ where $p \in [N], q \in [m]$.  
   \item $\beta_{x,\ell} = \phi_{\tilde x, p} \otimes v_{x_p, q}$ for $\ell = (m-1)p + q$ where $p \in [N], q \in [m]$.
   \item $\rho_{x,a} = \sigma_{\tilde x, a}$.
\end{itemize}
Then for any $x = (x_1, \ldots, x_N) \in \{0,1\}^{Nm}$
\begin{align*}
\sum_\ell \| \alpha_{x,\ell} \|^2 &= \sum_p \| \psi_{\tilde x, p} \|^2 \sum_{i \in [m]} \| u_{x_p,i} \|^2 \\
&\le \ADV^\pm(g) \sum_p \| \psi_{\tilde x, p} \|^2 \\
&\le \ADV_{rel}^\pm(f) \ADV^\pm(g) \enspace .
\end{align*}
A similar calculation shows $\sum_\ell \| \beta_{x,\ell} \|^2 \le \ADV_{rel}^\pm(f) \ADV^\pm(g)$ for any $x \in \{0,1\}^{Nm}$.  

Having established the objective value, we move on to the constraints.
\begin{align*}
\sum_{\ell: x_\ell \ne y_\ell} \langle \alpha_{x,\ell}, \beta_{x, \ell} \rangle &= \sum_{p \in [N]} \braket{\psi_{\tilde x, p}}{\phi_{\tilde y, p}} \sum_{i \in [m] \atop x_p(i) \ne y_p(i)} \braket{u_{x_p,i}}{v_{y_p,i}} \\
&= \sum_{p \in [N] \atop \tilde x_p  \ne \tilde y_p} \braket{\psi_{\tilde x, p}}{\phi_{\tilde y, p}} \\
&= 1 - \sum_{a: (\tilde x, a) \in f, (\tilde y, a) \in f} \braket{\sigma_{\tilde x,a}}{\sigma_{\tilde y,a}} \\
&= 1 - \sum_{a: (x, a) \in h, (y, a) \in h} \braket{\rho_{x,a}}{\rho_{y,a}} \enspace ,
\end{align*}
as desired.

Finally,  $\| \rho_{x,a} \|^2 = \| \sigma_{\tilde x ,a}\|^2 = 0$ if $(\tilde x, a) \not \in f$.  As $(x,a) \in h$ iff $(\tilde x,a) \in f$ this shows $\| \rho_{x,a} \|^2 = 0$ for all 
$(x,a) \not \in h$.
\end{proof}

\begin{corollary}
Let $f \subseteq \{0,1\}^N \times [K]$ be a relation and $g: \{0,1\}^m \rightarrow \{0,1\}$ be a Boolean function.  Then $\ADV_{rel}^\pm(f \circ g^N) = \ADV_{rel}^\pm(f) \ADV^\pm(g)$.  
\end{corollary}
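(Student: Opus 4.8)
The plan is to read the equality off from the two one-sided bounds already established: \cref{thm:upper} gives $\ADV_{rel}^\pm(f \circ g^N) \le \ADV_{rel}^\pm(f)\,\ADV^\pm(g)$ and \cref{thm:lower} gives the reverse inequality, so the two sides coincide. Nothing beyond quoting these two theorems is needed.

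Were I instead to prove the equality from scratch, I would use the same two-sided template that underlies the functional composition theorem \cref{thm:basic}. For the upper bound I would take an optimal solution to the minimization program of \cref{thm:adv_dual} for $g$ and an optimal solution to the program of \cref{def:adv_rel} for $f$, and assemble a feasible solution for $h = f \circ g^N$ by tensoring the per-block witness vectors: writing $\tilde x = (g(x_1),\ldots,g(x_N))$ and $\ell = (p-1)m+q$, set $\alpha_{x,\ell} = \psi_{\tilde x,p}\otimes u_{x_p,q}$, $\beta_{x,\ell} = \phi_{\tilde x,p}\otimes v_{x_p,q}$, and $\rho_{x,a} = \sigma_{\tilde x,a}$. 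The routine verifications are that the objective multiplies (pull the $g$-sum out of the $f$-sum), that the inner-product constraint telescopes first through the constraints for $g$ and then through those for $f$, and that the relational constraint $\|\rho_{x,a}\|^2 = 0$ for $(x,a)\notin h$ is inherited from that for $f$, using $(x,a)\in h$ iff $(\tilde x,a)\in f$.

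For the lower bound I would work with the maximization formulation \cref{thm:rel_adv_dual}. Starting from an optimal relational adversary matrix $\Gamma_f$ and an optimal functional adversary matrix $\Gamma_g$ with off-diagonal block $Z$, I would let $\Gamma_h$ be the matrix composition (\cref{def:comp}) of $\Gamma_f$ with $N$ copies of $Z$, and then check three things: that $\lambda_{max}(\Gamma_h) = \lambda_{max}(\Gamma_f)\,\|\Gamma_g\|^N$, which is the second conclusion of \cref{lem:main}; that $\|\Gamma_h\circ D_\ell\| \le \|\Gamma_f\circ D_p\|\,\|\Gamma_g\circ D_q\|\,\|\Gamma_g\|^{N-1}$, which follows the argument of \cite{HLS07} after absorbing the harmless diagonal correction that replaces $\hat\Gamma_g\circ D_q$ by $\widehat{\Gamma_g\circ D_q}$ and then applies the first conclusion of \cref{lem:main}; and the genuinely new condition $\Gamma_h\circ\phi_a\phi_a^T \preceq 0$. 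I expect the last condition to be the main obstacle: the idea is to factor $\Gamma_h\circ\phi_a\phi_a^T = (\tilde\Gamma_f\circ\phi_a\phi_a^T)\circ(\otimes^N\hat\Gamma_g)$, note that $\otimes^N\hat\Gamma_g \succeq 0$ by \cref{lem:hat_psd}, that $\tilde\Gamma_f\circ\phi_a\phi_a^T\preceq 0$ by pushing the relational constraint $\Gamma_f\circ\chi_a\chi_a^T\preceq 0$ through the row/column duplication of \cref{fact:duplication}, and hence that the Hadamard product is $\preceq 0$ by \cref{fact:simple}. These three facts combine to give $\lambda_{max}(\Gamma_h)/\|\Gamma_h\circ D_\ell\| \ge \ADV_{rel}^\pm(f)\,\ADV^\pm(g)$, and optimizing over the block indices $p,q$ finishes the bound.

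Lastly I would record the algorithmic consequence: since $(f\circ g^N)_a = f_a \circ g^N$, the functional composition theorem \cref{thm:basic} gives $\ADV^\pm((f\circ g^N)_a) = \ADV^\pm(f_a)\,\ADV^\pm(g)$, so $f\circ g^N$ is efficiently verifiable whenever $f$ is; combining the equality with \cref{thm:alg} and \cref{thm:belovs} then yields $Q(f\circ g^N) = \Theta(\ADV_{rel}^\pm(f)\,\ADV^\pm(g))$ for efficiently verifiable $f$.
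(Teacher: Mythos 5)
Your proposal is correct and is essentially identical to the paper's proof: the corollary follows immediately by combining \cref{thm:lower} and \cref{thm:upper}. Your additional from-scratch sketch also faithfully mirrors the arguments the paper uses to establish those two theorems, so there is nothing to add.
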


\begin{proof}
This follows from \cref{thm:lower} and \cref{thm:upper}.
\end{proof}

\begin{corollary}
\label{cor:pcomp}
Let $K$ be a positive integer, $f \subseteq \{0,1\}^N \times [K]$ be an efficiently verifiable relation, and $g: \{0,1\}^m \rightarrow \{0,1\}$ be a Boolean function.  
Then $Q(f \circ g^N) = \Theta(\ADV_{rel}^\pm(f) \ADV^\pm(g))$.
\end{corollary}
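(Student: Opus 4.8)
The plan is to obtain both directions by combining the composition theorem just proved with the two characterizations available for relations: the algorithmic upper bound $Q(f) = O(\ADV_{rel}^\pm(f))$ of \cref{thm:alg} and Belovs' lower bound $Q(f) = \Omega(\ADV_{rel}^\pm(f))$ for efficiently verifiable relations (\cref{thm:belovs}). The upper bound is immediate: applying \cref{thm:alg} to $h = f \circ g^N$ gives $Q(f \circ g^N) = O(\ADV_{rel}^\pm(f \circ g^N))$, and by \cref{thm:lower,thm:upper} this equals $O(\ADV_{rel}^\pm(f)\,\ADV^\pm(g))$. (Here $h$ inherits from $f$ the standing assumption that every input has a valid output, since a valid $a$ for $(g(x_1),\dots,g(x_N))$ under $f$ is a valid $a$ for $x$ under $h$.)

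For the lower bound, the key step is to check that $h = f \circ g^N$ is \emph{itself} efficiently verifiable, so that \cref{thm:belovs} applies to it. Fix $a \in [K]$. Unwinding the definitions, the Boolean function $h_a$, which outputs $1$ on $x = (x_1, \ldots, x_N)$ exactly when $\big((g(x_1), \ldots, g(x_N)), a\big) \in f$, is precisely the composition $f_a \circ g^N$, where $f_a$ is the Boolean function with $f_a(z) = 1$ iff $(z,a) \in f$. Hence by the functional composition theorem \cref{thm:basic}, $\ADV^\pm(h_a) = \ADV^\pm(f_a)\,\ADV^\pm(g)$. Assuming $g$ is non-constant (so $\ADV^\pm(g) \neq 0$; if $g$ is constant both sides of the corollary vanish and it holds trivially), we divide and invoke the efficient verifiability of $f$:
\[
\frac{\ADV^\pm(h_a)}{\ADV_{rel}^\pm(h)} = \frac{\ADV^\pm(f_a)\,\ADV^\pm(g)}{\ADV_{rel}^\pm(f)\,\ADV^\pm(g)} = \frac{\ADV^\pm(f_a)}{\ADV_{rel}^\pm(f)} \to 0 \enspace ,
\]
the denominator having been rewritten via \cref{thm:lower,thm:upper}. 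So $\ADV^\pm(h_a) = o(\ADV_{rel}^\pm(h))$ for every $a$, i.e.\ $h$ is efficiently verifiable, and \cref{thm:belovs} yields $Q(h) = \Omega(\ADV_{rel}^\pm(h)) = \Omega(\ADV_{rel}^\pm(f)\,\ADV^\pm(g))$. Together with the upper bound this gives the claimed $\Theta$.

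I do not expect any genuine obstacle: the corollary is a bookkeeping consequence of results already in hand. The only points deserving a moment's care are the identification $h_a = f_a \circ g^N$ --- which is what makes \cref{thm:basic} applicable coordinate-by-coordinate over $a \in [K]$ --- and the observation that cancelling the common factor $\ADV^\pm(g)$ leaves the asymptotic little-$o$ relation intact; the degenerate constant-$g$ case is disposed of separately as above.
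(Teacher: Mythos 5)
Your proof is correct and takes essentially the same route as the paper's: show $h = f\circ g^N$ is efficiently verifiable via the identity $h_a = f_a\circ g^N$ together with \cref{thm:basic}, then invoke \cref{thm:belovs} for the lower bound and \cref{thm:alg} plus \cref{thm:upper} for the upper bound. Your explicit treatment of the degenerate constant-$g$ case (where $\ADV^\pm(g)=0$) is a small extra precaution not spelled out in the paper, but otherwise the argument is the same.
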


\begin{proof}
Let $h = f \circ g^N$.
We start by showing $Q(h) = \Omega(\ADV_{rel}^\pm(f) \ADV^\pm(g))$.
By \cref{thm:lower} $\ADV_{rel}^\pm(h) \ge \ADV_{rel}^\pm(f) \ADV^\pm(g)$.  For $a \in [K]$ let $h_a: \{0,1\}^{mN} \rightarrow \{0,1\}$ be defined as 
$h_a((x_1, \ldots, x_N)) = 1$ iff $((g(x_1), \ldots, g(x_N)),a) \in f$.  Letting $f_a : \{0,1\}^N \rightarrow \{0,1\}$ be defined as $f_a(x) = 1$ iff $(x,a) \in f$, we 
see that $h_a = f_a \circ g^N$.  By \cref{thm:basic} we have that $\ADV^\pm(h_a) = \ADV^\pm(f_a) \ADV^\pm(g)$ and therefore, as $f$ is efficiently 
verifiable, $\ADV^\pm(h_a) = o(\ADV_{rel}^\pm(h))$ for every $a \in [K]$.  Thus $h$ is also efficiently verifiable and the corollary follows by \cref{thm:belovs}.  

For the other direction, we use \cref{thm:upper} to obtain $\ADV_{rel}^\pm(h) \le \ADV_{rel}^\pm(f) \ADV^\pm(g)$ and 
then apply \cref{thm:alg}.
\end{proof}

\begin{corollary}
Let $K$ be a positive integer, $f \subseteq \{0,1\}^N \times [K]$ be an efficiently verifiable relation, and $g: \{0,1\}^m \rightarrow \{0,1\}$ be a Boolean function.  
Then $Q(f \circ g^N) = \Theta(Q(f) Q(g))$.
\end{corollary}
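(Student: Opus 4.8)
The statement is essentially a bookkeeping consequence of the composition result already established plus the known characterizations of $Q$ in terms of the adversary bounds. The plan is to start from \cref{cor:pcomp}, which gives $Q(f \circ g^N) = \Theta(\ADV_{rel}^\pm(f)\,\ADV^\pm(g))$ for an efficiently verifiable relation $f$, and then replace each of the two adversary quantities on the right-hand side by the corresponding query complexity.

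First I would invoke \cref{thm:adv_char} to get $Q(g) = \Theta(\ADV^\pm(g))$, which handles the $g$ factor. Next I would combine \cref{thm:alg} (giving $Q(f) = O(\ADV_{rel}^\pm(f))$ for any relation) with \cref{thm:belovs} (giving $Q(f) = \Omega(\ADV_{rel}^\pm(f))$, which is where the hypothesis that $f$ is efficiently verifiable is used) to conclude $Q(f) = \Theta(\ADV_{rel}^\pm(f))$. Multiplying these two $\Theta$-estimates and substituting into \cref{cor:pcomp} yields $Q(f \circ g^N) = \Theta(\ADV_{rel}^\pm(f)\,\ADV^\pm(g)) = \Theta(Q(f)\,Q(g))$.

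\textbf{Main obstacle.} There is no real obstacle here beyond making sure the hypotheses line up: the efficient-verifiability assumption on $f$ is needed both in \cref{cor:pcomp} and in \cref{thm:belovs}, and it is exactly what is assumed in the statement. The one point worth a sentence is that the lower bound $Q(f) = \Omega(\ADV_{rel}^\pm(f))$ genuinely requires efficient verifiability — without it the relational adversary bound is only known to be an upper bound on $Q(f)$ — so the corollary as stated cannot be weakened to arbitrary relations by this argument.
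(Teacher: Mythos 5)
Your proposal matches the paper's proof exactly: both derive the result from \cref{cor:pcomp}, then substitute $Q(g) = \Theta(\ADV^\pm(g))$ from \cref{thm:adv_char} and $Q(f) = \Theta(\ADV_{rel}^\pm(f))$ from \cref{thm:alg} together with \cref{thm:belovs} (the latter using efficient verifiability). Your added remark that the lower bound $Q(f) = \Omega(\ADV_{rel}^\pm(f))$ genuinely requires efficient verifiability is correct and a helpful clarification, but the underlying argument is the same.
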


\begin{proof}
By \cref{cor:pcomp}, $Q(f \circ g^N) = \Theta(\ADV_{rel}^\pm(f) \ADV^\pm(g))$.  We have $Q(g) = \Theta(\ADV^\pm(g))$ by \cref{thm:adv_char}.  Also,
$Q(f) = O(\ADV_{rel}^\pm(f))$ by \cref{thm:alg} and $Q(f) = \Omega(\ADV_{rel}^\pm(f))$ by \cref{thm:belovs} as $f$ is efficiently verifiable.  The corollary follows.
\end{proof}

\section*{Acknowledgements}
We thank Ronald de Wolf for asking us if an adversary composition theorem holds for relations, the question that began this note, and for helpful comments on 
the manuscript. 
A.B. is supported by the ERDF project number 1.1.1.2/I/16/113.
T.L. is supported by the Australian Research Council Grant No: DP200100950.

%\bibliographystyle{alpha}
%\bibliography{comp}

\newcommand{\etalchar}[1]{$^{#1}$}

\end{document}